\DeclareMathOperator*{\argmin}{argmin}
\newcommand{\bq}{\begin{equation}}
\newcommand{\eq}{\end{equation}}
\newcommand{\R}{\mathbb{R}}
\newcommand{\abs}[1]{\left\vert#1\right\vert}
\newcommand{\bO}{\mathcal{O}}
\newcommand{\one}{\mathds{1}}
\newcommand{\Dt}{\mathcal{D}}
\newcommand{\MA}{Monge-Amp\`ere\xspace}
\newcommand{\diag}{\text{diag}}
\newcommand{\vp}{v^\perp}
\newcommand{\M}{\mathcal{M}}
\algnewcommand{\LineComment}[1]{\State \(\triangleright\) #1}
\newtheorem{theorem}{Theorem}
\newtheorem{thm}{Theorem}
\theoremstyle{lemma}
\newtheorem{lemma}[theorem]{Lemma}
\newtheorem{defn}{Definition}
\theoremstyle{remark}
\newtheorem{rem}{Remark}
\begin{document}
\title{OPTIMAL TRANSPORT FOR SEISMIC FULL WAVEFORM INVERSION}

\author{Bj\"orn Engquist}
\thanks{The research was supported by the Texas Consortium for Computational Seismology. The first and the third author were also partially supported by NSF grant DMS-1522792.}
\address{Department of Mathematics and ICES, The University of Texas at Austin, 1 University Station C1200, Austin, TX 78712 USA}
\email{engquist@math.utexas.edu}

\author{Brittany D. Froese}
\address{Department of Mathematical Sciences, New Jersey Institute of Technology, University Heights, Newark, NJ 07102 USA}
\email{bdfroese@njit.edu}

\author{Yunan Yang}
\address{Department of Mathematics and ICES, The University of Texas at Austin, 1 University Station C1200, Austin, TX 78712 USA}. 
\email{yunanyang@math.utexas.edu}


\begin{abstract}
Full waveform inversion is a successful procedure for determining properties of the earth from surface measurements in seismology. This inverse problem is solved by a PDE constrained optimization where unknown coefficients in a computed wavefield are adjusted to minimize the mismatch with the measured data. We propose using the Wasserstein metric, which is related to optimal transport, for measuring this mismatch. Several advantageous properties are proved with regards to convexity of the objective function and robustness with respect to noise. The Wasserstein metric is computed by solving a \MA equation. We describe an algorithm for computing its Frechet gradient for use in the optimization. Numerical examples are given.
\end{abstract}

\date{\today}
\maketitle

\section{Introduction}\label{intro}

A central step in seismic exploration is the estimation of basic geophysical properties. This can, for example, be wave velocity, which is what we will consider here. This step is typically the basis of an imaging process to determine geophysical structures.

The computational technique full waveform inversion (FWI) was introduced to seismology in~\cite{lailly1983seismic,TarantolaInversion}. This inverse method follows the common strategy of PDE constrained optimization. 
In this context, the goal is to recover the unknown wave velocity $v(x,z)$ from the resulting wavefield $u(x,z,t)$.  In practice, measurments are available only at the surface, and the velocity field needs to be recovered from the surface measurement
\[ g = u(x,0,t). \]

In two-dimensions, for example, this can be modeled by the acoustic wave equation in the time domain:
\bq\begin{cases}\label{eq:forward}
u_{tt}(x,z,t)-v(x,z)^2\left(u_{xx}(x,z,t)+u_{zz}(x,z,t)\right)=0,\\
u(x,z,0) = u_0(x,z),\\
u_t(x,z,0) = 0,
\end{cases}
\eq 
where $u_0(x,z)$ is the initial wave field generated by a Ricker wavelet signal~\cite{zhang2013full}.
For a given wavefield $v$, the solution of this wave equation yields the simulated data
\bq\label{eq:surface} f(v) = u(x,0,t). \eq

In an ideal setting, the observed data also solves this forward problem so that
\[ g = f(v^*) \]
where $v^*$ is the true velocity field.  However, this is unlikely in practice due to the presence of noise, measurement errors, and modeling errors.  Instead, the goal of full waveform inversion is to estimate the the true velocity field through the solution of the optimization problem
\bq\label{eq:fwi}
\tilde{v} = \argmin\limits_v \,d(f(v),g)
\eq
where $d(f,g)$ is some measure of the misfit between two signals.

Two primary concerns in full waveform inversion are the well-posedness of the underlying model recovery problem and the suitability of the misfit $d(f,g)$ for minimization.
In this work we will focus on properties of the misfit measure $d$ and not on the overall question of uniqueness and stability of the inverse problem.  For additional details on uniqueness and stability, we refer to~\cite{isakov2006inverse,stefanov2013recovery,sylvester1987global}.

The $L_2$ norm is often used to measure the misfit, which typically generates many local minima and is thus unsuitable for minimization.  This problem is exacerbated by the fact that measured signals usually suffer from noise in the measurements~\cite{masoni2013alternative}.  

In~\cite{EFWass}, we proposed using the Wasserstein metric for the misfit function, i.e. $d(f,g)=W_2^2(f,g)$. The Wasserstein metric measures the distance between two distributions as the optimal cost of rearranging one distribution into the other~\cite[p.~207]{Villani}.  
The mathematical definition of the distance between the distributions $f:X\to\R^+$, $g:Y\to\R^+$ can be formulated as
\bq\label{eq:W2}  W_2^2(f,g) = \inf\limits_{T\in\M} \int\limits_X\abs{x-T(x)}^2f(x)\,dx \eq
where $\M$ is the set of all maps that rearrange the distribution $f$ into $g$. 

We cannot directly compute the Wasserstein metric between two wave fields since these are not probability measures.  Some additional processing is needed to ensure that the signals are strictly positive and have unit mass.  This can be easily done by separately comparing the positive and negative parts of the signals, which are then rescaled to have mass one.  We define $f^+ = \max\{f,0\}$, $f^-=\max\{-f,0\}$, and $\langle f\rangle = \int_X f(x)\,dx$.  With this notation, we propose solving the optimization problem~\eqref{eq:fwi} using the misfit
\bq\label{eq:misfit}
d(f,g) = W_2^2\left(\frac{f^+}{\langle f^+\rangle},\frac{g^+}{\langle g^+\rangle}\right) + W_2^2\left(\frac{f^-}{\langle f^-\rangle},\frac{g^-}{\langle g^-\rangle}\right).
\eq

It is our goal to prove several desirable properties relating to convexity and insensitivity to noise, which were briefly discussed in~\cite{EFWass}. Another important contribution in this paper is a derivation of the gradient of $d(f(v),g)$ with respect to $v$, which is essential for gradient based minimization algorithms. It is outside the scope of this work to study serious applications, but we  give some numerical examples to show the quantitative behavior and to compare with the simple search algorithm used in~\cite{EFWass}. In this earlier paper, simple geometrical optics was used in the forward problem. Here we consider the full wave equation.

We briefly recall one example from~\cite{EFWass} that illustrates the advantage of the Wasserstein metric. Consider the misfit between the simple wavelet $f$ in Figure~\ref{fig:wavelet_prof} and another wavelet shifted by a distance $s$.  Figures~\ref{fig:wavelet1d_L2}-\ref{fig:wavelet1d_W2_pm} illustrate that the $L_2$ norm is constant when $s$ is large and has many local minima. On the other hand, the Wasserstein metric is uniformly convex with respect to shifts, which are natural in travel time mismatches. 

\begin{figure}[htp]
\begin{center}
\subfigure[]{\includegraphics[width=0.49\textwidth]{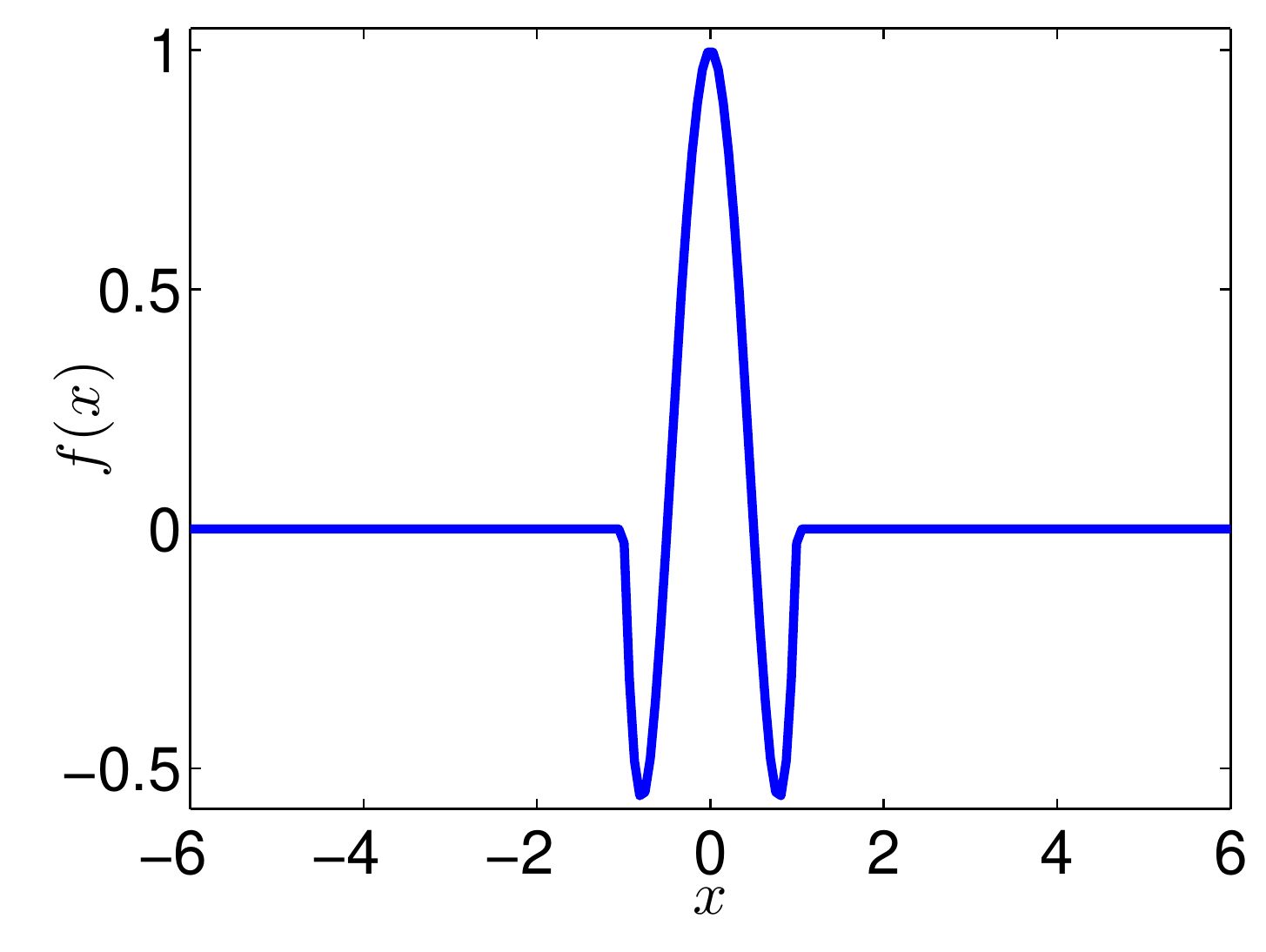}\label{fig:wavelet_prof}}\\
\subfigure[]{\includegraphics[width=0.49\textwidth]{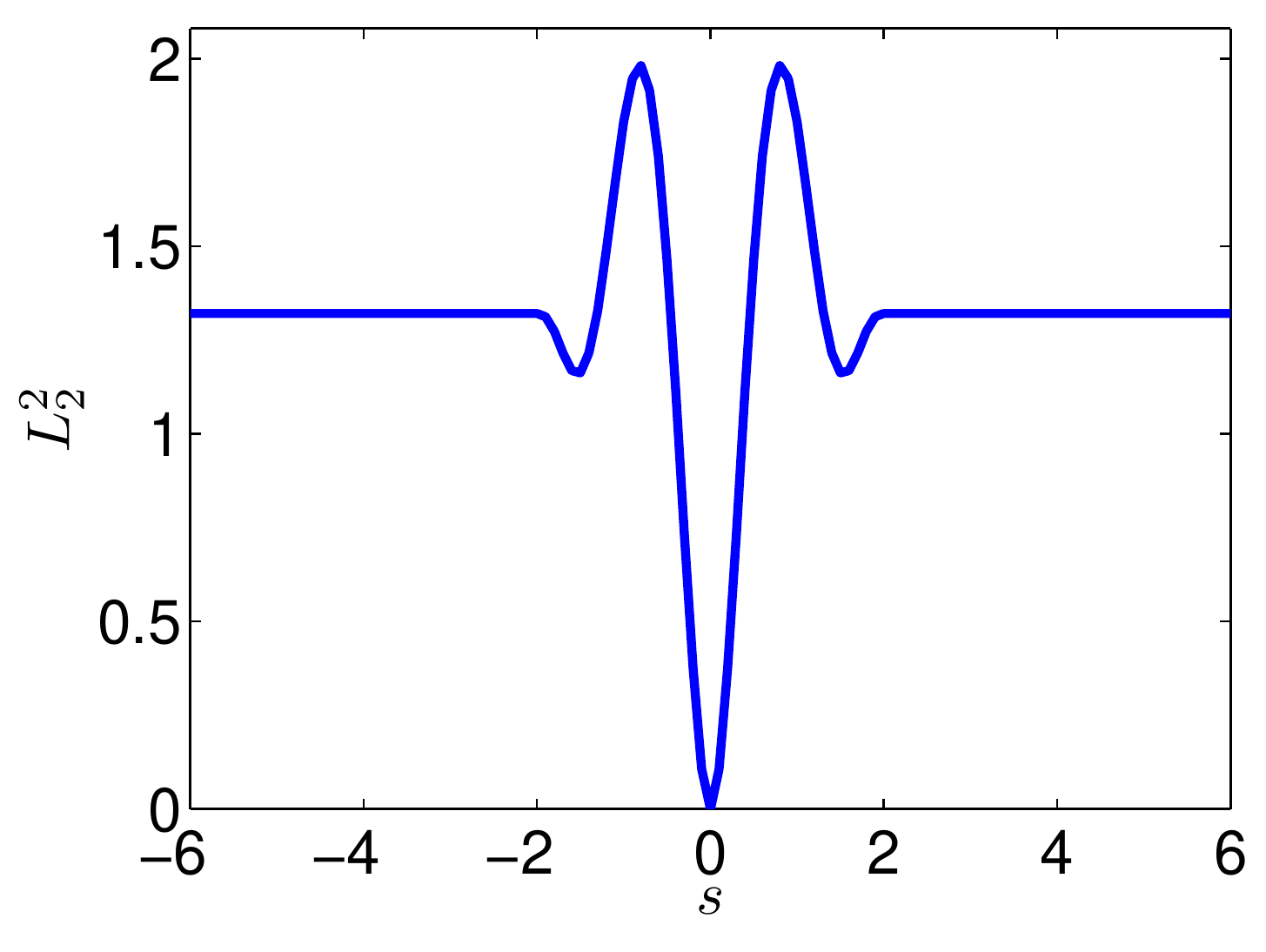}\label{fig:wavelet1d_L2}}
\subfigure[]{\includegraphics[width=0.49\textwidth]{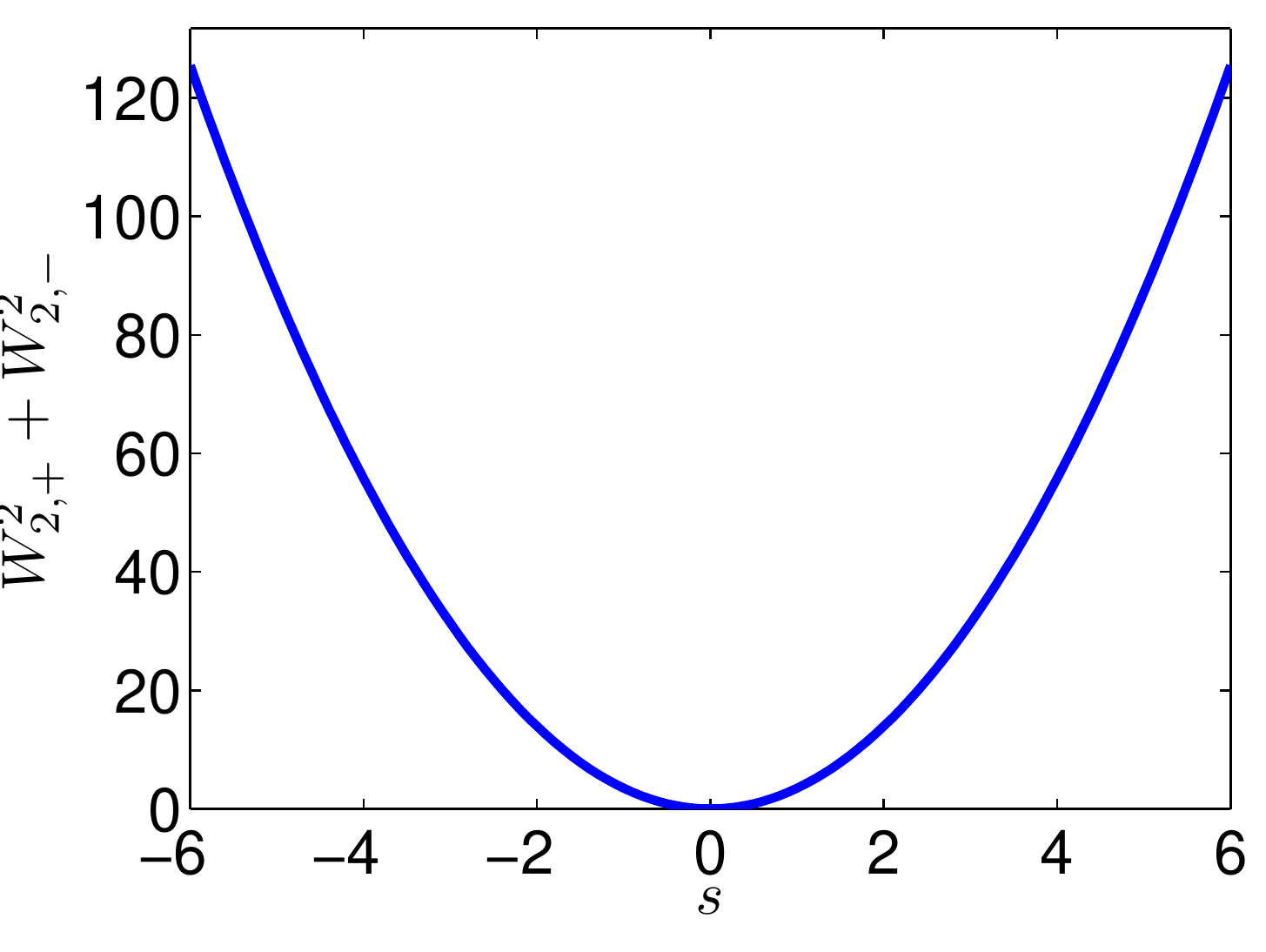}\label{fig:wavelet1d_W2_pm}}
\caption{\subref{fig:wavelet_prof}~A wavelet profile $f(x)$.  The distance between $f(x)$ and $g(x) = f(x-s)$ measured by \subref{fig:wavelet1d_L2}~$L_2^2(f,g)$, and \subref{fig:wavelet1d_W2_pm}~$W_2^2(f^+,g^+) + W_2^2(f^-,g^-)$\cite{EFWass}. }
\label{fig:wavelet}
\end{center}
\end{figure}

Earlier algorithms for the numerical computation of the Wasserstein metric required a large number of operations~\cite{BenBre,Bertsekas,Bosc}.  The optimal transportation problem can be rigorously related to the following \MA equation~\cite{Brenier, KnottSmith}, which enables the construction of more efficient methods for computing the Wasserstein metric.
\bq\label{eq:MA} \begin{cases}
\det(D^2u(x)) = f(x)/g(\nabla u(x)) + \langle u \rangle, & x\in X\\
u \text{ is convex.}
\end{cases}\eq
The Wasserstein metric is then given by
\bq\label{eq:W2_u}
W_2^2(f,g) = \int\limits_X \abs{x-\nabla u(x)}^2f(x)\,dx.
\eq

There are now fast and robust numerical algorithms for the solution of~\eqref{eq:MA}, and thus for the computation of $W_2^2$, and these form the basis for our numerical techniques~\cite{benamou2014numerical}.

In~\autoref{sec:convexity}, we will study the convexity of the quadratic Wasserstein metric with respect to shift, dilation, and partial amplitude change. Errors between simulated and observed data in the form of shifts and dilations can  occur naturally from an incorrect velocity model, while inaccurate measurements or variations in the strength of reflecting surfaces can result in larger or smaller local amplitudes. We will give a rigorous proof of these convexity statements using the fundamental theorem of optimal transport and convexity of the Monge-Kantorovich minimization problem\cite[p.~80]{Villani}.

In~\autoref{sec:noise}, we will discuss how the Wasserstein metric is affected by random noise with a uniform distribution. For the optimal transport problem on the real line, both a theorem and numerical illustrations will be given to show that the effect of noise is negligible. For higher dimensions, we estimate the effect of noise by finding an upper bound.

We review an efficient numerical method for computing the Wasserstein metric via the numerical solution of the nonlinear elliptic \MA partial differential equation in~\autoref{sec:numerics}. After obtaining the discrete solution, we can easily approximate the squared Wasserstein metric. 

We are interested in recovering the parameters in the forward wave equation by minimizing the Wasserstein metric between simulated and observed data. In~\autoref{sec:linear}, we describe a method for numerically obtaining the gradient of the Wasserstein metric by first discretizing the metric, then linearizing the result. This approach is particularly straightforward for the numerical method utilized in this paper.

Finally, numerical examples presented in~\autoref{sec:results} show the quantitative and qualitative behavior of the minimization procedure. Parameters in low dimensional model problems are recovered by minimizing the Wasserstein metric without any need to regularize the problem.

          \section{Convexity of the quadratic Wasserstein metric}\label{sec:convexity}
In most optimization problems, convexity of the objective function is a desirable property. The example of convexity given in Figure~\ref{fig:wavelet1d_W2_pm} was our motivation for considering the Wasserstein metric in the context of full waveform inversion. In this section, we will mathematically study this convexity with respect to variations that are common in the context of seismic exploration.  In particular,
we analyze cases where $f$ is derived from $g$ by either a local change of amplitude or a linear change of variables in the form of a shift or dilation.

The shift and dilation are 
typical effects of variations in the velocity $v$, as can be seen in a simple example.
A one-dimensional, constant velocity model is
\begin{equation*}
\begin{cases}
\frac{\partial^2 u}{\partial t^2} = v^2 \frac{\partial^2u}{\partial x^2},\quad &x>0, t>0, \\
u = 0, \quad\frac{\partial u}{\partial t} = 0,\quad &x>0, t=0,\\
u = u_0(t),\quad &x=0,t>0.
\end{cases}
\end{equation*}
One solution to the equation is $u(x,t;v) = u_0(t-x/v)$. For fixed $x$, variations in $v$ induce shifts in the signal. When $t$ is fixed, variation of $v$ generates dilation in $u_0$ as a function of $x$.

Local changes in amplitude can originate from variations in the strength of reflecting surfaces.  A material composed of layers of different materials will yield a velocity field that is (approximately) piecewise constant.  The strength of the reflection at a discontinuity is, in turn, related to the velocity field in each layer.  An incorrect estimation of the velocity field will then lead to larger or smaller local amplitudes in the resulting wavefield. 

We note that any shifts, dilations, and amplitude changes in a signal~$f$ will correspond to a similar transformation in the positive and negative parts of~$f$.  Thus in the results below, it is sufficient to assume that the original profile~$f$ is non-negative.  The results below will still consider any changes in mass that result from the transformations.

\subsection{Convexity with respect to shift}
We begin by assuming looking at the effects of shifting a density function~$f$, which has no effect on the total mass.

\begin{thm}[Convexity of shift]
\label{thm:shift}
Suppose $f$ and $g$ are probability density functions of bounded second moment. Let $T$ be the optimal map that rearranges $f$ into $g$. If $f_s(x)=f(x-s\eta)$ for $\eta \in \R^n$, then the optimal map from $f_s(x)$ to $g(y)$ is $T_s= T(x - s\eta)$. Moreover, $W_2^2(f_s,g)$ is convex with respect to the shift size $s$.
\end{thm}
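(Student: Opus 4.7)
The plan is to prove both claims by invoking Brenier's theorem (the fundamental theorem of optimal transport cited earlier), which characterizes the optimal map as the gradient of a convex potential, and then to reduce the second claim to an explicit quadratic expression in $s$ via a change of variables.

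First I would establish the optimality of $T_s$. By Brenier's theorem, since $T$ is the optimal map from $f$ to $g$, there exists a convex function $u$ with $T = \nabla u$. Define $u_s(x) = u(x - s\eta)$; this is convex in $x$ because $u$ is convex, and $\nabla u_s(x) = \nabla u(x - s\eta) = T(x - s\eta) = T_s(x)$. So $T_s$ is the gradient of a convex function. To identify it as \emph{the} optimal map from $f_s$ to $g$, it suffices to check the push-forward condition, which follows by a change of variables: for any test function $\phi$,
\begin{equation*}
\int \phi(T_s(x)) f_s(x)\,dx = \int \phi(T(x-s\eta))\, f(x - s\eta)\,dx = \int \phi(T(x'))\, f(x')\,dx' = \int \phi(y) g(y)\,dy,
\end{equation*}
using the translation $x' = x - s\eta$ and the fact that $T$ pushes $f$ to $g$. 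Uniqueness in Brenier's theorem then identifies $T_s$ as the optimal map.

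Next I would compute $W_2^2(f_s, g)$ explicitly using formula~\eqref{eq:W2} applied to the optimal map $T_s$. By the same translation $x' = x - s\eta$,
\begin{equation*}
W_2^2(f_s, g) = \int |x - T_s(x)|^2 f_s(x)\,dx = \int |x' + s\eta - T(x')|^2 f(x')\,dx'.
\end{equation*}
Expanding the square gives
\begin{equation*}
W_2^2(f_s,g) = W_2^2(f,g) + 2s\,\eta \cdot \int (x' - T(x')) f(x')\,dx' + s^2 |\eta|^2,
\end{equation*}
where boundedness of the second moment ensures all integrals are finite. This is a quadratic polynomial in $s$ with leading coefficient $|\eta|^2 \geq 0$, hence convex; strictly convex whenever $\eta \neq 0$.

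The only real subtlety is the first part, where one must ensure that the candidate map $T_s$ is genuinely optimal rather than merely a valid transport plan; the Brenier characterization handles this cleanly, so there is no serious obstacle. The computation in the second part is routine once the optimal map is in hand.
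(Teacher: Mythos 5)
Your proof is correct, and the second half coincides with the paper's computation; but the first half takes a genuinely different route to the optimality of $T_s$. The paper proves optimality via the cyclical monotonicity criterion (Theorem~\ref{thm:optimality}): it checks directly that the cyclical monotonicity inequality for $T_s$ reduces, after substituting $x_i - s\eta$, to the corresponding inequality for $T$. You instead invoke Brenier's theorem: writing $T = \nabla u$ with $u$ convex, you note that $u_s(x) = u(x - s\eta)$ is convex with $\nabla u_s = T_s$, verify the push-forward condition by a change of variables, and conclude by uniqueness of the Brenier map. Both are standard characterizations of optimality for the quadratic cost and both arguments are sound. The cyclical-monotonicity check is marginally more elementary in that it needs no existence or uniqueness statement for the potential (uniqueness in Brenier's theorem requires absolute continuity of the source measure, which is automatic here since $f_s$ is a density, but is worth stating explicitly); your route has the advantage of exhibiting the convex potential directly and of adapting immediately to other affine changes of variables, in the spirit of Lemma~\ref{lem:dilation}. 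The remaining computation --- the translation $x' = x - s\eta$ and the expansion
\begin{equation*}
W_2^2(f_s,g) = W_2^2(f,g) + 2s\,\eta\cdot\int \left(x - T(x)\right) f(x)\,dx + s^2\abs{\eta}^2,
\end{equation*}
a quadratic in $s$ with leading coefficient $\abs{\eta}^2 \geq 0$ --- is exactly the paper's argument, and your remark that convexity is strict for $\eta \neq 0$ is a correct small refinement.
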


The proof relies on the concept of cyclical monotonicity, which can be used to characterize an optimal map.

\begin{defn}[Cyclical monotonicity]
\label{cyclical}
We say that a map $T:X\subset\R^n\to\R^n$ is cyclically monotone if for any $m\in\mathbb{N}^+$, $x_i \in X$, $1\leq i \leq m$, $x_0 \equiv x_m$,
\bq\label{eq:cyclical}
\sum_{i=1}^{m}x_i \cdot T(x_i) \geq  \sum_{i=1}^{m} x_i \cdot T(x_{i-1}).
\eq 
\end{defn}

\begin{thm}[Optimality criterion for quadratic cost~{\cite[Theorem~2.13]{ambrosio2013user}}]\label{thm:optimality}
Let $f$ and $g$ be probability density functions supported on sets $X$ and $Y$ respectively.  A mass-preserving map $T:X\to Y$ minimizes the quadratic optimal transport cost if and only if it is cyclically monotone.
\end{thm}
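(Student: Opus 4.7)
The plan is to prove both directions of the equivalence separately using the standard optimal transport toolkit.

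For the forward direction (optimality $\Rightarrow$ cyclical monotonicity), I would argue by contraposition. Suppose \eqref{eq:cyclical} fails at some $m$-tuple $x_1,\dots,x_m\in X$, so that $\sum_i x_i\cdot T(x_i) < \sum_i x_i\cdot T(x_{i-1})$. I would construct a competitor $\tilde T$ of strictly smaller cost as follows. Choose disjoint balls $B_1,\dots,B_m$ of small equal $f$-mass $\alpha$ centered at the $x_i$, taken to be Lebesgue points of $T$ so that $T$ is well approximated by the constant $T(x_i)$ on $B_i$. Let $\phi_i:B_i\to B_{i-1}$ be a measurable bijection pushing $f|_{B_i}$ forward to $f|_{B_{i-1}}$, and define $\tilde T(x)=T(\phi_i(x))$ on each $B_i$ while setting $\tilde T = T$ elsewhere. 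Then $\tilde T$ still pushes $f$ forward to $g$, and a direct computation, using that $\sum_i |T(x_i)|^2$ telescopes since $x_0\equiv x_m$, gives
\[
\int\bigl(|x-\tilde T(x)|^2-|x-T(x)|^2\bigr)f(x)\,dx \;=\; 2\alpha\left(\sum_i x_i\cdot T(x_i)-\sum_i x_i\cdot T(x_{i-1})\right)+o(\alpha),
\]
which is strictly negative for sufficiently small $\alpha$, contradicting optimality of $T$.

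For the reverse direction (cyclical monotonicity $\Rightarrow$ optimality), I would invoke Rockafellar's theorem: any cyclically monotone map is contained in the subdifferential of some proper lower semi-continuous convex function $u:\R^n\to\R\cup\{+\infty\}$, so $T(x)\in\partial u(x)$ for $f$-a.e.\ $x\in X$. Letting $u^*$ denote the Legendre transform, the Young--Fenchel inequality $u(x)+u^*(y)\ge x\cdot y$ holds with equality precisely when $y\in\partial u(x)$. Integrating against an arbitrary mass-preserving map $S:X\to Y$ yields
\[
\int_X x\cdot S(x)\,f(x)\,dx \;\le\; \int_X u(x)f(x)\,dx+\int_Y u^*(y)g(y)\,dy,
\]
with equality when $S=T$. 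Since $\int|x-S(x)|^2 f\,dx = \int|x|^2 f\,dx+\int|y|^2 g\,dy - 2\int x\cdot S(x)f\,dx$ and the first two terms depend only on the fixed marginals, maximizing $\int x\cdot S(x)f\,dx$ is equivalent to minimizing the quadratic cost, so $T$ is optimal.

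The main obstacle is the competitor construction in the forward direction: one must turn a pointwise violation at $m$ points into a genuine measurable, mass-preserving map of strictly lower cost. This relies on Lebesgue differentiation to control $T$ on small balls, on the existence of measure-preserving bijections between balls carrying equal $f$-mass, and on the telescoping cancellation of the quadratic $|T(x_i)|^2$ terms that makes the first-order change in cost reveal exactly the cyclical monotonicity deficit. The reverse direction is cleaner once Rockafellar's theorem is invoked, though strictly speaking one should verify that the almost-everywhere statement $T(x)\in\partial u(x)$ suffices for the global integral inequality.
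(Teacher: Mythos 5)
The paper does not actually prove this statement: it is quoted as Theorem~2.13 of the cited user's guide of Ambrosio and Gigli and used as a black box in the proofs of Theorem~\ref{thm:shift} and Lemma~\ref{lem:dilation}. So there is no in-paper argument to compare against; what you have done is reconstruct the standard proof from the cited literature, and your reconstruction is essentially the right one. Your reverse direction (Rockafellar's theorem to embed the cyclically monotone map in the subdifferential $\partial u$ of a convex function, then Young--Fenchel $u(x)+u^*(y)\ge x\cdot y$ with equality on the graph of $T$, plus the observation that the marginal-dependent terms $\int |x|^2 f$ and $\int |y|^2 g$ are fixed) is exactly the Knott--Smith/Brenier duality argument, and your forward direction (rerouting a small amount $\alpha$ of mass cyclically among balls $B_i$ around a violating tuple, with the $|T(x_i)|^2$ terms telescoping so the first-order cost change is $2\alpha$ times the monotonicity deficit) is the standard variational construction. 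Both computations check out: the cost difference you display has the correct sign and magnitude.

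Two caveats, both standard but worth making explicit. First, in the contrapositive you write that the violating points $x_1,\dots,x_m$ are ``taken to be Lebesgue points of $T$,'' but those points are handed to you by the failure of~\eqref{eq:cyclical}; you cannot choose them. Since $T$ is only defined $f$-a.e., pointwise cyclical monotonicity at literally every tuple (as in Definition~\ref{cyclical}) is not even invariant under modification on null sets, so the correct formulation --- and the one under which your construction goes through --- is cyclical monotonicity on the support of the plan $(\mathrm{id}\times T)_\#(f\,dx)$, or equivalently at $f$-a.e.\ tuple; points of that support can be approximated by Lebesgue points, which repairs the argument. This is how the cited reference phrases and proves the result. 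Second, in the duality step you need $u\in L^1(f\,dx)$ and $u^*\in L^1(g\,dy)$ before you may split $\int\bigl(u(x)+u^*(S(x))\bigr)f(x)\,dx$ into two integrals; this requires the bounded-second-moment hypothesis (present in the surrounding results of the paper, absent from the theorem statement itself) together with a normalization of $u$, and your closing remark correctly identifies that this, along with the a.e.\ status of $T(x)\in\partial u(x)$, is where care is needed. Neither caveat is a fatal gap; both are resolved in the reference the paper cites.
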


\begin{proof}[Proof of Theorem~\ref{thm:shift}]
By construction, the map $T_s$ rearranges $f_s$ into $g$.  The cyclical monotoncity of $T_s$ follows immediately from the cyclical monotoncity of $T$:
\begin{align*}
\sum_{i=1}^{m}x_i \cdot T_s(x_i) &= \sum_{i=1}^{m}[(x_i-s\eta)+s\eta]\cdot T(x_i-s\eta)\\
  &\geq \sum_{i=1}^{m}[(x_i-s\eta)+s\eta]\cdot T(x_{i-1}-s\eta)\\ 
	&= \sum_{i=1}^{m}x_i \cdot T_s(x_{i-1}).
\end{align*}
     
Then the squared Wasserstein metric can be expressed as
     \begin{align}
     W_2^2(f_s,g) & =  \int\left|x-T_s(x)\right|^2f_s(x)dx\nonumber =  \int\left|x-T(x-s\eta)\right|^2f(x-s\eta)dx \nonumber\\
     & =  W_2^2(f,g)+s^2|\eta|^2+2s\int\eta\cdot(x-T(x))f(x)\,dx.
     \end{align}\label{eqn:convex}
     The convexity with respect to $s$ is evident from the last equation.
\end{proof}

 \subsection{Convexity with respect to dilation}
Next we consider the convexity of the Wasserstein metric with respect to dilations or contractions of the density functions.  We begin by characterizing the optimal map in this setting.

\begin{lemma}[Optimal map for dilation]\label{lem:dilation}
Assume $f$ and $g$ are probability density functions of bounded second moment satisfying $f(x)=\det(A)g(Ax)$, where $A$ is a symmetric positive definite matrix. Then the optimal transport map rearranging $f(x)$ into $g(y)$ is $T(x)=Ax$.
\end{lemma}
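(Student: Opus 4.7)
The plan is to verify the two hypotheses required by Theorem~\ref{thm:optimality}: first, that $T(x) = Ax$ is mass-preserving from $f$ to $g$; second, that $T$ is cyclically monotone in the sense of Definition~\ref{cyclical}. Together these imply optimality of $T$ for the quadratic cost, which is exactly the conclusion of the lemma.

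For the mass-preservation step, I would use a direct change of variables. Since $A$ is invertible, for any measurable set $E \subset \R^n$,
\begin{equation*}
\int_{T^{-1}(E)} f(x)\,dx = \int_{A^{-1}E} \det(A)\, g(Ax)\,dx = \int_E g(y)\,dy,
\end{equation*}
after substituting $y = Ax$. Thus $T_\# f = g$. This step is routine and will not be the bottleneck.

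The main work is verifying cyclical monotonicity, i.e.\ showing that for any $m \in \mathbb{N}^+$ and any cyclic sequence $x_0, x_1, \dots, x_m \in \R^n$ with $x_0 \equiv x_m$, one has
\begin{equation*}
\sum_{i=1}^m x_i \cdot A x_i \;\geq\; \sum_{i=1}^m x_i \cdot A x_{i-1}.
\end{equation*}
Here I would exploit that $A$ is symmetric positive definite to introduce its unique SPD square root $B = A^{1/2}$, so $A = B^\top B$. Setting $y_i = B x_i$, the inequality becomes $\sum_i |y_i|^2 \geq \sum_i y_i \cdot y_{i-1}$, which follows from the Cauchy--Schwarz (or AM--GM) bound $y_i \cdot y_{i-1} \leq \tfrac{1}{2}(|y_i|^2 + |y_{i-1}|^2)$ combined with the cyclic reindexing identity $\sum_i |y_{i-1}|^2 = \sum_i |y_i|^2$. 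This is the only nontrivial piece, and it is elementary once the square root is introduced.

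As a sanity check, I would note the alternative (and arguably cleaner) argument via Brenier's theorem: the map $T(x) = Ax$ is the gradient of the strictly convex potential $\phi(x) = \tfrac{1}{2}\, x \cdot A x$, so optimality is immediate. However, since the paper has set up the cyclical monotonicity framework for the proof of Theorem~\ref{thm:shift}, using the same criterion here keeps the exposition uniform, and I would present the cyclical monotonicity proof as the main argument, mentioning Brenier's characterization only in passing.
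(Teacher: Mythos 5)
Your proposal is correct and follows essentially the same route as the paper: both verify optimality via the cyclical monotonicity criterion of Theorem~\ref{thm:optimality} after factoring $A = M^\top M$ (the paper uses a Cholesky factor, you use the SPD square root, and your AM--GM step is the same computation as the paper's completion of the square $\tfrac{1}{2}\sum_i |Lx_i - Lx_{i-1}|^2 \geq 0$). Your explicit verification of mass preservation is a small addition the paper leaves implicit, but the core argument is identical.
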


\begin{proof}
    Again, the cyclical monotonicity condition of Theorem~\ref{thm:optimality} is the key to verifying optimality.  
		
Since $A$ is symmetric positive definite, it has a unique Cholesky decomposition  $A=L^TL$ for some upper triangular matrix $L$. Then for any $x_i \in X$,
\begin{align*}
\sum_{i=1}^m x_i \cdot (T(x_i)-T(x_{i-1})) &=  \frac{1}{2}\sum_{i=1}^m (x_{i-1}^TL^TLx_{i-1} + x_i^TL^TLx_i - 2x_i^TL^TLx_{i-1})\\
	&= \frac{1}{2}\sum_{i=1}^m \abs{Lx_i - Lx_{i-1}}^2\\
	&\geq 0,
\end{align*}
which verifies the optimality condition.
\end{proof}

\begin{rem}
The requirement that $A$ be symmetric positive definite is necessary for $y = Ax$ to be the optimal map. For example, let $A$ be the rotation matrix 
\(\left(
\begin{array}{ccc}
\cos\theta & \sin\theta \\
-\sin\theta & \cos\theta
\end{array}
\right)\) with $\theta = \pi$ and let $g$ satisfy the symmetry condition
$g(x,y) = g(-x,-y)$.
 Then the optimal map from $f(x) = g(Ax)$ to $g$ is the identity function instead of $T(x) = Ax$.
\end{rem}

Convexity is a separate issue as it depends on the parameterization. A special case of dilation occurs when $A$ is a diagonal matrix. The following theorem is a generalization where the dilation need not occur along coordinate directions.

\begin{thm}[Convexity with respect to dilation]\label{thm: dilconvex}
Assume $f(x)$ is a probability density function and  $g(y) = f(A^{-1}y)$ where $A$ is a symmetric positive definite matrix. Then the squared Wasserstein metric $W_2^2(f,g/\langle g\rangle)$ is convex with respect to the eigenvalues $\lambda_1,\dots,\lambda_n$ of $A$.
\end{thm}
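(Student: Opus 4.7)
The plan is to identify the optimal map explicitly via Lemma~\ref{lem:dilation}, substitute it into the Wasserstein formula, and then read off convexity after diagonalization.

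First I would normalize. Since $g(y) = f(A^{-1}y)$, a change of variables gives $\langle g\rangle = \det(A)$, so the normalized target is $\tilde{g}(y) = f(A^{-1}y)/\det(A)$. I want to bring this into the form required by Lemma~\ref{lem:dilation}. Playing the roles $f_{\text{lem}} = f$, $g_{\text{lem}} = \tilde{g}$, $A_{\text{lem}} = A$, the hypothesis $f(x) = \det(A)\,\tilde{g}(Ax)$ reduces to the tautology $f(x) = f(x)$. Therefore the lemma applies and the optimal transport map from $f$ to $\tilde{g}$ is $T(x) = Ax$.

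Next I would plug this into~\eqref{eq:W2} to obtain
\begin{equation*}
W_2^2\!\left(f, g/\langle g\rangle\right) \;=\; \int_X |x - Ax|^2\, f(x)\, dx \;=\; \int_X x^T(I-A)^2 x\, f(x)\, dx,
\end{equation*}
using symmetry of $A$. I would then diagonalize with a fixed orthogonal $Q$: write $A = Q\Lambda Q^T$ where $\Lambda = \diag(\lambda_1,\dots,\lambda_n)$, so that $(I-A)^2 = Q(I-\Lambda)^2 Q^T$. Changing variables $y = Q^T x$ and setting $\tilde{f}(y) = f(Qy)$ (a fixed density independent of $\Lambda$), the integral separates as
\begin{equation*}
W_2^2\!\left(f, g/\langle g\rangle\right) \;=\; \sum_{i=1}^n (1-\lambda_i)^2 \, c_i, \qquad c_i \;=\; \int y_i^2\, \tilde{f}(y)\, dy \;\geq\; 0.
\end{equation*}
Each summand is a non‑negative multiple of a convex function of a single $\lambda_i$, so the sum is jointly convex in $(\lambda_1,\dots,\lambda_n)$.

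I do not anticipate a serious obstacle here. The one point requiring some care is the bookkeeping in the identification with Lemma~\ref{lem:dilation}, where the unnormalized $g$ and the normalization $\det(A)$ must be tracked so that the lemma's hypothesis is verified exactly. A conceptually important observation, rather than an obstacle, is that convexity is asserted with respect to the eigenvalues alone, which tacitly treats the eigenvectors $Q$ of $A$ as fixed; this is precisely what makes the constants $c_i$ independent of the $\lambda_i$ and is what allows the separable‑sum form above to yield convexity.
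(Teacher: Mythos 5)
Your proposal is correct and follows essentially the same route as the paper's proof: apply Lemma~\ref{lem:dilation} to identify the optimal map $T(x)=Ax$, substitute into the transport cost, and diagonalize $A$ by an orthogonal change of variables so that the cost becomes a quadratic function of the eigenvalues. Your final separable form $\sum_i (1-\lambda_i)^2 c_i$ and the remark that the eigenvectors are held fixed make the convexity claim slightly more explicit than the paper's version, but the argument is the same.
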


\begin{proof}
In order to define the Wasserstein metric, it is necessary to work with the normalized density $g / \langle g \rangle = \det(A)^{-1}f(A^{-1}y)$. 
By Lemma~\ref{lem:dilation}, the optimal mapping is
\[ T(x) = Ax = O\Lambda O^T \]
where $O$ is an orthogonal matrix and $\Lambda$ is a diagonal matrix whose entries are the eigenvalues $\lambda_1, \ldots, \lambda_n$.  Then the squared Wasserstein metric can be expressed as
\begin{align*}
W_2^2\left(f,\frac{g}{\langle g \rangle}\right) &= \int f(x)\abs{x-Ax}^2\,dx\\
  &= \int f(x) x^TO (I-\Lambda)^2O^Tx\,dx\\
	&= \int f(Oz) z^T (I-\Lambda)^2z\,dz,
\end{align*}
which is convex in $\lambda_1, \ldots, \lambda_n$.
\end{proof}

    \begin{rem}
    If both dilation and shift are present, the Wasserstein metric will be convex with respect to each of the corresponding parameters. 
    \end{rem}

    \subsection{Convexity with respect to partial amplitude change}
Finally, we consider the problem where a profile $f$ is derived from $g$, but with a decreased amplitude in part of the domain.  That is, we suppose that the domain is decomposed into $\Omega = \Omega_1 \cup \Omega_2$ with $\Omega_1\cap\Omega_2 = \emptyset$.  For an amplitude loss parameter $0 \leq \beta \leq 1$ we suppose that $f$ depends on the probability density function~$g$ via
\bq
f_\beta(x) = \begin{cases}
\beta g(x), & x\in\Omega_1\\
g(x), & x\in\Omega_2.
\end{cases}
\eq

\begin{thm}[Convexity with respect to partial amplitude loss]\label{thm:ampLoss}
The squared Wasserstein metric~$W_2^2(f_\beta/\langle f_\beta\rangle, g)$ is a convex function of the parameter $\beta$.
\end{thm}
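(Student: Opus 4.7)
The plan is to exploit the observation that, although $f_\beta/\langle f_\beta\rangle$ is not affine in $\beta$, both the unnormalized profile $f_\beta$ and its total mass $c(\beta):=\langle f_\beta\rangle=\beta m_1+m_2$ (where $m_i:=\int_{\Omega_i}g$) are affine in $\beta$. Accordingly I would introduce the auxiliary quantity $V(\beta):=c(\beta)\,W_2^2(f_\beta/\langle f_\beta\rangle,g)$, which by the mass-scaling property of the quadratic Wasserstein cost equals $W_2^2(f_\beta,\,c(\beta)g)$---the Kantorovich cost between two positive measures sharing the common mass $c(\beta)$.

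First I would show that $V$ is convex in $\beta$. Given $\beta_1,\beta_2\in[0,1]$, $t\in[0,1]$, and $\beta_t=t\beta_1+(1-t)\beta_2$, if $\pi_i$ is an optimal plan between $f_{\beta_i}$ and $c(\beta_i)g$, then the affinity of $\beta\mapsto f_\beta$ and $\beta\mapsto c(\beta)g$ makes $t\pi_1+(1-t)\pi_2$ a nonnegative measure whose two marginals are exactly $f_{\beta_t}$ and $c(\beta_t)g$. Using this admissible plan to bound $V(\beta_t)$ yields $V(\beta_t)\le tV(\beta_1)+(1-t)V(\beta_2)$; this is exactly the convexity of the Monge--Kantorovich minimization problem cited in the introduction. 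Next, the boundary value $V(1)=W_2^2(g,g)=0$ together with $V\ge 0$ and convexity of $V$ forces $V$ to be non-increasing on $[0,1]$: any strictly positive one-sided slope of $V$ at some $\beta_0<1$ would, by convexity, propagate to give $V(1)\ge V(\beta_0)+V'(\beta_0)(1-\beta_0)>0$, contradicting $V(1)=0$.

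Finally, with $c(\beta)$ linear, strictly positive, and increasing on $[0,1]$, and $V$ convex, nonnegative, and non-increasing, I would conclude convexity of $V/c$. The formal identity
\[
\left(\frac{V}{c}\right)''=\frac{V''}{c}-\frac{2V'c'}{c^2}+\frac{2V(c')^2}{c^3}
\]
exhibits $(V/c)''$ as a sum of three nonnegative contributions under these sign conditions, the middle summand being nonnegative precisely because $V'\le 0$ and $c'>0$; for convex $V$ that is not $C^2$, the same conclusion follows from the corresponding three-point convex combination inequality, which reduces after a short algebraic manipulation to the monotonicity of $V/c$ itself. The main obstacle is precisely this last step, since a convex function divided by a positive linear function need not be convex in general; the proof therefore relies essentially on the sign structure forced by the boundary condition $V(1)=0$, without which the argument would break down.
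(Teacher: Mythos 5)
Your argument is correct and reaches the theorem by a genuinely different route from the paper. The paper reparametrizes the \emph{normalized} density as $h_\alpha$ with $\alpha\in[-1,0]$, chosen precisely so that $\alpha\mapsto h_\alpha$ is affine; convexity of the Monge--Kantorovich problem in the first marginal then gives convexity of $W_2^2(h_\alpha,g)$ in $\alpha$, monotonicity in $\alpha$ follows from convexity together with $h_0=g$, and convexity in $\beta$ is recovered through the concavity of the reparametrization $\alpha(\beta)$. You instead keep the natural parameter $\beta$ and move the normalization onto the cost: $V(\beta)=c(\beta)\,W_2^2(f_\beta/\langle f_\beta\rangle,g)=W_2^2(f_\beta,c(\beta)g)$ is convex because \emph{both} marginals are affine in $\beta$ (joint convexity of the Kantorovich cost), non-increasing because $V\ge 0$, $V$ is convex and $V(1)=0$, and division by the positive increasing affine mass $c(\beta)$ preserves convexity under exactly these sign constraints. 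The two proofs run on the same two engines --- convexity of the Kantorovich minimization and a monotonicity consequence of a vanishing endpoint value --- but package the normalization differently; yours makes explicit which structural facts ($V\ge 0$, $V'\le 0$, $c$ affine increasing) rescue the generally false claim that a convex function divided by a positive linear one is convex, whereas the paper encodes the same information in the concavity of $\alpha(\beta)$ combined with the non-increasing-misfit lemma.

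The one loose point is your treatment of non-smooth $V$: the three-point convexity inequality for $V/c$ does not ``reduce to the monotonicity of $V/c$.'' The clean repair is to write $V/c=V\cdot w$ with $w=1/c$, which is positive, convex and decreasing, and invoke the elementary fact that a product of two nonnegative, convex, similarly ordered (here both non-increasing) functions is convex: for $z=tx+(1-t)y$,
\[
V(z)w(z)\le\bigl(tV(x)+(1-t)V(y)\bigr)\bigl(tw(x)+(1-t)w(y)\bigr)=tV(x)w(x)+(1-t)V(y)w(y)-t(1-t)\bigl(V(x)-V(y)\bigr)\bigl(w(x)-w(y)\bigr),
\]
and the cross term is nonpositive since $V$ and $w$ are both non-increasing. (Alternatively, mollify $V$, which preserves convexity, nonnegativity and monotonicity, and pass to the limit.) With that substitution the proof is complete.
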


In order to prove this result, we introduce an alternative form of the rescaled density in terms of a parameter $-1 \leq \alpha \leq 0$.
    \[
    h_\alpha(x) =
		\begin{cases}(1+\alpha)g(x), & x\in\Omega_1,\\
                (1-\gamma_\alpha)g(x), & x\in\Omega_2,
		\end{cases}
		\]
		where
		\[
    \gamma_\alpha = \alpha  \frac{\int_{\Omega_1} g}{\int_{\Omega_2}g}.
		\]
Note that $h_\alpha$ is non-negative and has unit mass by construction, with $h_0 = g$.

The two different families of parameters are connected as follows.
\begin{lemma}[Parameterization of amplitude loss]\label{lem:parameters}
Define the parameterization function
\[ \alpha(\beta) = \frac{\beta}{\beta\int_{\Omega_1}g + \int_{\Omega_2}g} - 1. \]
Then $\alpha:[0,1]\to[-1,0]$ is concave and the associated density functions are related through
\[ \hat{f}_\beta \equiv \frac{f_\beta}{\langle f_\beta \rangle} = h_{\alpha(\beta)}.\]
\end{lemma}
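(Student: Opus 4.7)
The plan is straightforward, since both claims in the lemma reduce to elementary algebraic identities once the right simplifying notation is in place. First I would abbreviate $a = \int_{\Omega_1} g$ and $b = \int_{\Omega_2} g$, so that $a + b = 1$ because $g$ is a probability density. This immediately gives $\langle f_\beta\rangle = \beta a + b$, and the expression for $\alpha(\beta)$ simplifies to $\alpha(\beta) = b(\beta-1)/(\beta a + b)$, from which one reads off $\alpha(0) = -1$ and $\alpha(1) = 0$.

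Next I would verify the identity $\hat{f}_\beta = h_{\alpha(\beta)}$ by comparing the two density functions piece by piece. On $\Omega_1$, the coefficient of $g$ in $\hat{f}_\beta$ is $\beta/(\beta a + b) = 1 + \alpha(\beta)$ by definition of $\alpha$, which matches the coefficient in $h_{\alpha(\beta)}$. On $\Omega_2$, the coefficient in $\hat{f}_\beta$ is $1/(\beta a + b)$, while the coefficient in $h_{\alpha(\beta)}$ is
\[
1 - \gamma_{\alpha(\beta)} = 1 - \alpha(\beta)\,\frac{a}{b} = \frac{(\beta a + b) - a(\beta-1)}{\beta a + b} = \frac{a+b}{\beta a + b} = \frac{1}{\beta a + b},
\]
using $a + b = 1$ in the last step. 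The two coefficients agree, so the densities coincide.

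For concavity of $\alpha$, I would simply differentiate twice: $\alpha'(\beta) = b/(\beta a + b)^2$ and $\alpha''(\beta) = -2ab/(\beta a + b)^3 \leq 0$ for $\beta \in [0,1]$, with strict inequality whenever both $\Omega_1$ and $\Omega_2$ carry positive mass. (The degenerate cases $a = 0$ or $b = 0$ are trivial since then either $f_\beta = g$ or $f_\beta$ is a constant multiple of $g$.) Since $\beta a + b > 0$ throughout $[0,1]$, the map $\alpha:[0,1]\to[-1,0]$ is smooth and concave.

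There is no real obstacle here; the only thing to be careful about is keeping straight which density is being normalized against which total mass, and using $\langle g\rangle = a + b = 1$ at exactly the right spot to collapse the $\Omega_2$ coefficient. The lemma is essentially a bookkeeping result that repackages the single parameter $\beta$ (a unilateral amplitude decrease) into the symmetric parameter $\alpha$ (a mean-preserving redistribution), which will presumably make the subsequent convexity argument in Theorem~\ref{thm:ampLoss} more natural.
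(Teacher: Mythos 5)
Your proof is correct, and in fact the paper states Lemma~\ref{lem:parameters} without any proof at all, treating it as routine bookkeeping; your direct verification (computing $\langle f_\beta\rangle = \beta a + b$, matching coefficients on $\Omega_1$ and $\Omega_2$ using $a+b=1$, and checking $\alpha'' = -2ab/(\beta a + b)^3 \leq 0$) is exactly the computation the authors implicitly rely on. Your remark about the degenerate cases $a=0$ or $b=0$ is a small point of care that the paper does not address.
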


The proof of convexity with respect to $\beta$ will come via convexity with respect to $\alpha$.
\begin{lemma}[Convexity with respect to partial amplitude change]\label{lem:amplitude}
The squared Wasserstein metric $W_2^2(h_\alpha,g)$ is a convex function of the parameter $\alpha$. 
\end{lemma}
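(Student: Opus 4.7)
The plan is to exploit the fact that $h_\alpha$ depends \emph{affinely} on $\alpha$ as a measure, and then appeal to the convexity of the Wasserstein distance that was already invoked in~\autoref{sec:convexity} (convexity of the Monge--Kantorovich minimization problem, see \cite[p.~80]{Villani}).

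First I would rewrite the density as
\[
h_\alpha(x) \;=\; g(x) \,+\, \alpha\, \rho(x), \qquad \rho(x) \;=\; \begin{cases} g(x), & x \in \Omega_1, \\ -\dfrac{\int_{\Omega_1} g}{\int_{\Omega_2} g}\, g(x), & x \in \Omega_2, \end{cases}
\]
so that the map $\alpha \mapsto h_\alpha$ is affine into the space of signed measures. A short check (mass conservation and sign) verifies that for $\alpha \in [-1,0]$ the function $h_\alpha$ is indeed a probability density, which makes $W_2^2(h_\alpha, g)$ well defined on the parameter interval of interest.

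Next I would use the Kantorovich dual representation
\[
W_2^2(\mu,g) \;=\; \sup_{(\phi,\psi)\in\Phi}\!\left\{ \int \phi(x)\,d\mu(x) + \int \psi(y)\,g(y)\,dy \right\},
\]
where $\Phi = \{(\phi,\psi) : \phi(x)+\psi(y) \le |x-y|^2\}$. For fixed $g$, the right-hand side is a pointwise supremum of affine functionals in $\mu$, hence $\mu \mapsto W_2^2(\mu,g)$ is convex on the space of probability measures. Composing with the affine map $\alpha \mapsto h_\alpha$ preserves convexity, so $\alpha \mapsto W_2^2(h_\alpha,g)$ is convex on $[-1,0]$. (Equivalently, one could apply the displacement/linear interpolation convexity of the Monge--Kantorovich functional cited in \cite[p.~80]{Villani} directly to the interpolant $h_{(1-t)\alpha_0+t\alpha_1} = (1-t)h_{\alpha_0}+t h_{\alpha_1}$.)

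The only mildly delicate step is recognizing that $h_\alpha$ is truly affine in $\alpha$: the rescaling factor $\gamma_\alpha$ is defined to be linear in $\alpha$ precisely so that the probability-density constraint $\int h_\alpha = 1$ is preserved along a straight line in $\alpha$. Once this linearity is noted, the rest is automatic from the abstract convexity of $W_2^2$. I do not expect any serious obstacle; the real work of combining this with Lemma~\ref{lem:parameters} (and the concavity of $\alpha(\beta)$, together with monotonicity in $\alpha$) is what will subsequently yield Theorem~\ref{thm:ampLoss}.
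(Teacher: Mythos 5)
Your proposal is correct and takes essentially the same route as the paper: both arguments hinge on the observation that $\alpha \mapsto h_\alpha$ is affine, so that $s h_{\alpha_1}+(1-s)h_{\alpha_2}=h_{s\alpha_1+(1-s)\alpha_2}$, after which convexity of the Monge--Kantorovich minimization problem in its first argument yields the result. The only difference is cosmetic: the paper cites Villani for that convexity and verifies the affine identity by direct computation on $\Omega_1$ and $\Omega_2$, whereas you make the affine structure explicit via $h_\alpha = g + \alpha\rho$ and justify the convexity through Kantorovich duality.
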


\begin{proof}
Choose any $\alpha_1,\alpha_2 \in [-1,0]$ and $s\in [0,1]$. From convexity of the Monge-Kantorovich minimization problem~\cite[p.~220]{Villani}, we have
    \begin{equation}\label{ineq}
    W_2^2 
		\left(sh_{\alpha_1}+(1-s)h_{\alpha_2}, g\right) \leq s W_2^2\left(h_{\alpha_1},g\right)+\left(1-s\right)W_2^2(h_{\alpha_2},g).
    \end{equation}

We can calculate
    \begin{align*}
    sh_{\alpha_1}+(1-s)h_{\alpha_2} &=
    \begin{cases}
    s(1+\alpha_1)g+(1-s)(1+{\alpha_2})g, & x\in\Omega_1,\\
    s(1-\gamma_{\alpha_1})g+(1-s)(1-\gamma_{{\alpha_2}})g, & x\in\Omega_2.
    \end{cases}
		\\
		&=
    \begin{cases}
    (1+s\alpha_1+{\alpha_2}-s{\alpha_2})g, & x\in\Omega_1,\\
    (1-\gamma_{s\alpha_1+(1-s){\alpha_2}})g, & x\in\Omega_2,
    \end{cases}
		\\
		&= h_{s\alpha_1+(1-s){\alpha_2}}.
    \end{align*}

    Thus we can rewrite Equation~\eqref{ineq} as
    \bq 
    W_2^2(h_{s\alpha_1 + (1-s){\alpha_2}},g) \leq sW_2^2(h_{\alpha_1})+(1-s)W_2^2(h,f_{\alpha_2},g)
    \eq
 and the Wasserstein metric $W_2^2(h_\alpha,g)$ is convex with respect to $\alpha$.
\end{proof}          

A simple consequence of this result is that the misfit is a decreasing function of~$\alpha$.
\begin{lemma}[Misfit is non-increasing]\label{lem:decreasing}
Let $-1 \leq \alpha_1 < \alpha_2 \leq 0$.  Then $W_2^2(h_{\alpha_2}, g) \leq W_2^2(h_{\alpha_1},g)$.
\end{lemma}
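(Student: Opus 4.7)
The plan is to deduce Lemma \ref{lem:decreasing} directly from the convexity established in Lemma \ref{lem:amplitude}, together with the observation that $W_2^2(h_\alpha, g)$ attains its global minimum at $\alpha = 0$. Since $h_0 = g$ by construction, we have $W_2^2(h_0, g) = 0$, which is the absolute minimum of any nonnegative quantity. A convex function on $[-1,0]$ whose global minimum is realized at the right endpoint must be non-increasing on the entire interval, and this is exactly the content of the lemma.

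Concretely, I would set $F(\alpha) := W_2^2(h_\alpha, g)$ and note $F \geq 0$ with $F(0) = 0$. Given $-1 \leq \alpha_1 < \alpha_2 \leq 0$, the case $\alpha_1 = 0$ is vacuous, so assume $\alpha_1 < 0$. Then I would write $\alpha_2$ as the convex combination
\[
\alpha_2 = t \alpha_1 + (1-t) \cdot 0, \qquad t = \frac{\alpha_2}{\alpha_1} \in [0,1],
\]
and invoke Lemma \ref{lem:amplitude} to obtain
\[
F(\alpha_2) \leq t F(\alpha_1) + (1-t) F(0) = t F(\alpha_1) \leq F(\alpha_1),
\]
where the final inequality uses $t \leq 1$ and $F(\alpha_1) \geq 0$.

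There is essentially no main obstacle beyond verifying the global minimum claim. The only subtlety worth stating explicitly is that $h_0 = g$ (which is immediate from the formula for $h_\alpha$ with $\gamma_0 = 0$), so that the Wasserstein distance at $\alpha = 0$ is genuinely zero. Everything else is an immediate consequence of the already-established convexity and the standard fact that a convex function attaining its minimum at an endpoint of its domain is monotone on that domain.
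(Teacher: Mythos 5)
Your proof is correct and follows essentially the same route as the paper: both write $\alpha_2 = t\alpha_1 + (1-t)\cdot 0$ with $t = \alpha_2/\alpha_1$, apply the convexity of Lemma~\ref{lem:amplitude}, and use $h_0 = g$ (so the distance vanishes at $\alpha = 0$) to conclude. Your version is slightly more explicit about the vacuous case $\alpha_1 = 0$ and about why the last inequality holds, but the argument is identical.
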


\begin{proof}
Define the parameter $s\in[0,1]$ by $s = \alpha_2 / \alpha_1$.  Then we can use the convexity result of Lemma~\ref{lem:amplitude} to compute
\begin{align*}
W_2^2(h_{\alpha_2},g) &= W_2^2(h_{s\alpha_1 + (1-s)\cdot 0},g)\\
  &\leq s W_2^2(h_{\alpha_1}, g) + (1-s) W_2^2(h_0, g)\\
	&\leq W_2^2(h_{\alpha_1}, g),
\end{align*}
where we have used the fact that $h_0 = g$.
\end{proof}

Using these lemmas, we can now establish convexity with respect to the natural amplitude loss parameter~$\beta$.

\begin{proof}[Proof of Theorem~\ref{thm:ampLoss}]
Choose any $\beta_1, \beta_2, s \in [0,1]$.  
From the concavity of $\alpha(\beta)$ we have
\[ \alpha(s\beta_1 + (1-s)\beta_2) \geq s\alpha(\beta_1) + (1-s)\alpha(\beta_2). \]
Applying Lemmas~\ref{lem:amplitude}-\ref{lem:decreasing} we can compute 
\begin{align*}
W_2^2(\hat{f}_{s\beta_1 + (1-s)\beta_2},g) &= W_2^2(h_{\alpha(s\beta_1 + (1-s)\beta_2)},g)\\
  &\leq W_2^2(h_{s\alpha(\beta_1) + (1-s)\alpha(\beta_2)},g)\\
	&\leq s W_2^2(h_{\alpha(\beta_1)},g) + (1-s)W_2^2(h_{\alpha(\beta_2)},g)\\
	&= s W_2^2(\hat{f}_{\beta_1},g) + (1-s)W_2^2(\hat{f}_{\beta_2},g),
\end{align*}
which establishes the convexity.
\end{proof}

\section{Insensitivity with respect to noise}\label{sec:noise}

In the practical application of full waveform inversion, it is natural to experience noise in the measured signal, and therefore robustness with respect to noise is a desirable property in a misfit function. We will show that Wasserstein metric is substantially less sensitive to noise than the $L_2$ norm.

The Wasserstein metric depends on the square of the translate $T$. This implies that if $f$ is an oscillatory perturbation of $g$ then the Wasserstein metric $W_2^2(f,g)$ is small. A simple one-dimensional example given by Villani~\cite[Exercise~7.11]{Villani} shows that $W_2^2(f_\epsilon, g) = O(\epsilon^2)$ for $f_\epsilon =\left(1+\sin\frac{2\pi x}{\epsilon}\right)$ and $g=1$ on $[0,1]$. A numerical example without analysis was given in~\cite{EFWass}.

\subsection{One dimension}
In one dimension, it is possible to exactly solve the optimal transportation problem in terms of the cumulative distribution functions
\[ F(x) = \int_0^x f(t)\,dt, \quad G(x) = \int_0^x g(t)\,dt. \]
See Figure~\ref{fig:real line}. Then it is well known~\cite[Theorem~2.18]{Villani} that the optimal transportation cost is
\bq\label{eq:cost1D}
W_2^2(f,g) = \int_0^1|F^{-1}(t)-G^{-1}(t)|^2\,dt.
\eq
If additionally the target density~$g$ is positive, then the optimal map from $f$ to $g$ is given by
\[ T(x) = G^{-1}(F(x)). \]

\begin{figure}[htp]
\begin{center}
\subfigure[]{\includegraphics[width = 0.49\textwidth]{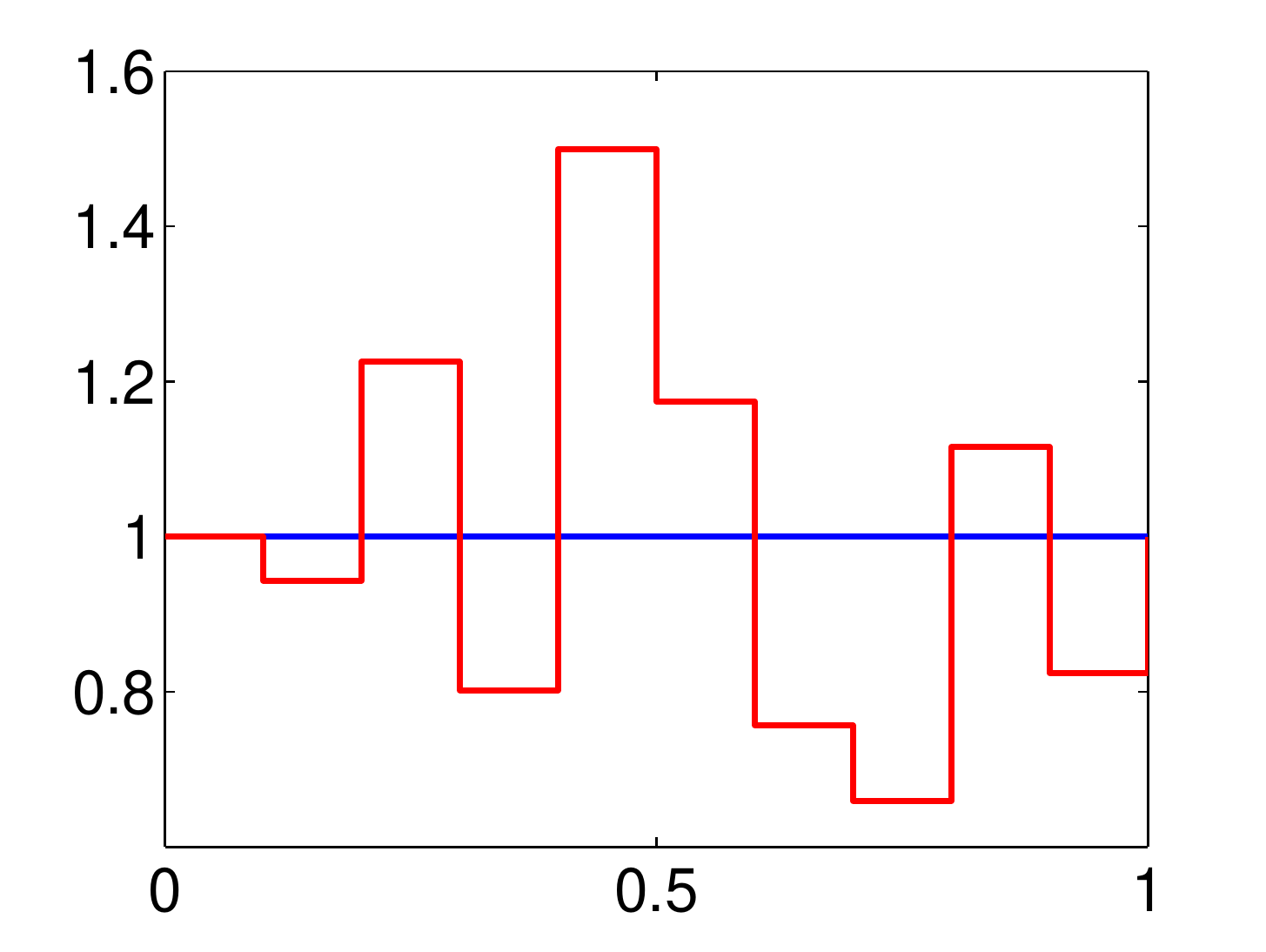}\label{fig:noise10}}
\subfigure[]{\includegraphics[width = 0.49\textwidth]{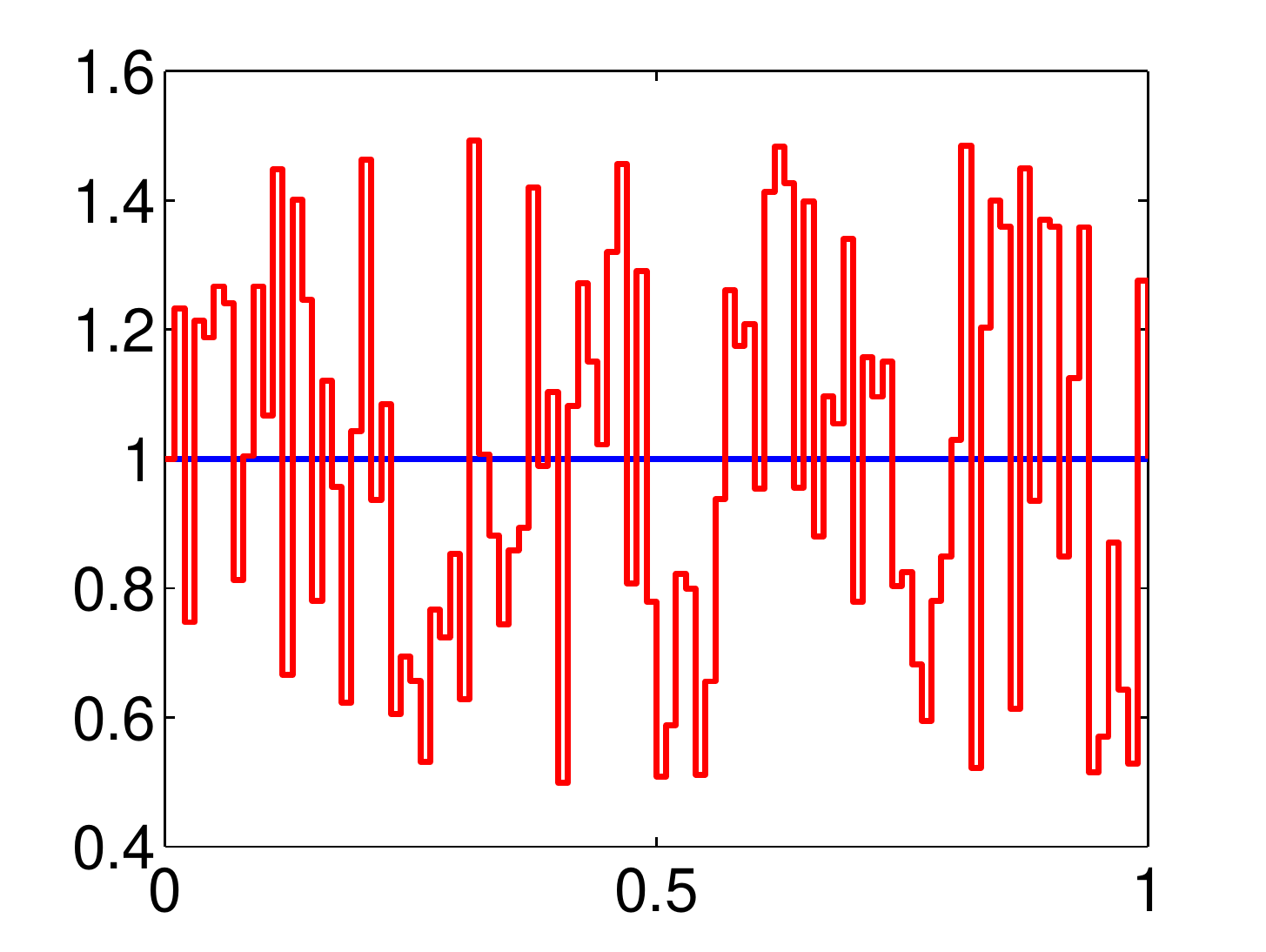}\label{fig:noise100}}
\end{center}
\vspace*{-4pt}
\caption{
Densities $g$ (blue) and $f_N$ (red, $c=-0.5$) for
\subref{fig:noise10}~$N = 10$ and \subref{fig:noise100}~$N = 100$.}
\label{fig:noise density}
\end{figure}

\begin{thm}[Insensitivity to noise in 1-D]\label{1D_noise}
Let $g$ be a positive probability density function on $[0,1]$ and choose $0<c<\min g$.  Let $f_N(x) = g(x) + r^N(x)$,
which contains piecewise constant additive noise 
$r_N$ drawn from the uniform distribution $U[-c,c]$.
Then $\mathbb{E}W_2^2(f_N/\langle f_N \rangle, g) = \bO(\frac{1}{N})$.
\end{thm}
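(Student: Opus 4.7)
My plan is to use the one-dimensional formula for $W_2^2$ in terms of cumulative distribution functions, then replace the inverse CDF comparison with a direct CDF comparison using the lower bound $\min g > c > 0$, and finally exploit the independence structure of the piecewise constant noise. Concretely, I would write $R_N(x) = \int_0^x r_N(t)\,dt$ so that the (unnormalized) CDF of $f_N$ is $F_N = G + R_N$ and the normalized CDF is $\tilde F_N(x) = (G(x)+R_N(x))/(1+R_N(1))$, giving the exact identity
\[
\tilde F_N(x) - G(x) \;=\; \frac{R_N(x) - R_N(1)\, G(x)}{1 + R_N(1)}.
\]

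Next I would convert $W_2^2$ into an $L^2$-norm estimate on CDFs. Starting from $W_2^2(\tilde f_N,g) = \int_0^1 (F^{-1}(t) - G^{-1}(t))^2\,dt$ and substituting $t = \tilde F_N(y)$, the integrand becomes $(y - G^{-1}(\tilde F_N(y)))^2$. Since $G$ is strictly increasing with $g \geq \min g =: m > 0$, its inverse is $1/m$-Lipschitz, so $|y - G^{-1}(\tilde F_N(y))| \leq |\tilde F_N(y) - G(y)|/m$. Using $|R_N(1)| \leq c < 1$ to bound $\tilde f_N$ from above and $1+R_N(1)$ away from zero, this yields a deterministic inequality of the form
\[
W_2^2(\tilde f_N, g) \;\leq\; C(g,c) \int_0^1 \bigl(R_N(x) - R_N(1)\,G(x)\bigr)^2\,dx.
\]

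Then I would take expectations and exploit independence. Partitioning $[0,1]$ into the $N$ subintervals on which $r_N$ is constant with values $c_1,\dots,c_N$ i.i.d.\ $U[-c,c]$ (each with variance $c^2/3$), the quantity $R_N(x) - R_N(1)G(x)$ is an explicit linear combination $\sum_j a_j(x)\,c_j$ whose coefficients are $\mathcal O(1/N)$ and take the values $(1-G(x))/N$ for indices below the interval containing $x$, $-G(x)/N$ for indices above, plus one boundary term of the same order. Independence gives
\[
\mathbb E\bigl(R_N(x)-R_N(1)G(x)\bigr)^2 = \tfrac{c^2}{3}\sum_j a_j(x)^2 \leq \tfrac{c^2}{3N}\bigl((1-G(x))^2 + G(x)^2\bigr) + \mathcal O(1/N^2),
\]
and integrating in $x$ produces the desired $\mathcal O(1/N)$ bound.

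The only delicate step is the passage from $W_2^2$ to $\|\tilde F_N - G\|_{L^2}^2$: one must handle the change of variables through the random (non-smooth) map $\tilde F_N$ and confirm that the Lipschitz bound on $G^{-1}$ suffices even though $\tilde F_N$ itself may have flat or discontinuous regions where noise drives $f_N$ close to zero. Since $c < \min g$ guarantees $\tilde f_N$ is strictly positive and uniformly bounded, this step reduces to an elementary monotone change of variables; the remainder of the proof is then just a variance computation for a sum of independent bounded random variables.
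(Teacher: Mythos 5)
Your proof is correct, and it takes a genuinely different route from the paper's. The paper reduces (without loss of generality, though the reduction is not spelled out) to $g\equiv 1$, writes down the inverse CDFs $F_N^{-1}$ and $G_N^{-1}$ explicitly, first compares $f_N$ with the mass-matched constant $g_N = 1+\bar r^N$, bounds $\int |F_N^{-1}-G_N^{-1}|^2$ by $\frac{2h^3}{(1-c)^2}\sum_i\bigl(\sum_{l\le i} r_l - ih\sum_k r_k\bigr)^2$, takes expectations using independence, and only at the end rescales to unit mass via the homogeneity of $W_2^2$. You instead normalize up front, stay on the CDF (rather than inverse-CDF) side, and pass from $W_2^2$ to $\|\tilde F_N - G\|_{L^2}^2$ via the change of variables $t=\tilde F_N(y)$ combined with the $1/m$-Lipschitz bound on $G^{-1}$; the hypothesis $0<c<\min g$ is exactly what makes $\tilde f_N$ strictly positive and bounded, so that substitution is legitimate, and the closing variance computation for $\sum_j a_j(x)c_j$ with coefficients of size $\mathcal O(1/N)$ is the same independence argument the paper uses in inverse-CDF coordinates. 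What your route buys is that it handles a general positive density $g$ directly, with no reduction to the uniform case and no need to manipulate the piecewise-defined inverse CDF of the noisy density; what the paper's route buys is a matching lower bound $C_1/N \le \mathbb{E}W_2^2(f_N,g_N)$, showing the $1/N$ rate is sharp, which your one-sided estimate does not provide (though the theorem as stated only claims the upper bound, so your argument fully proves it).
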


\begin{figure}
\centering
\subfigure[]{\includegraphics[width = 0.49\textwidth]{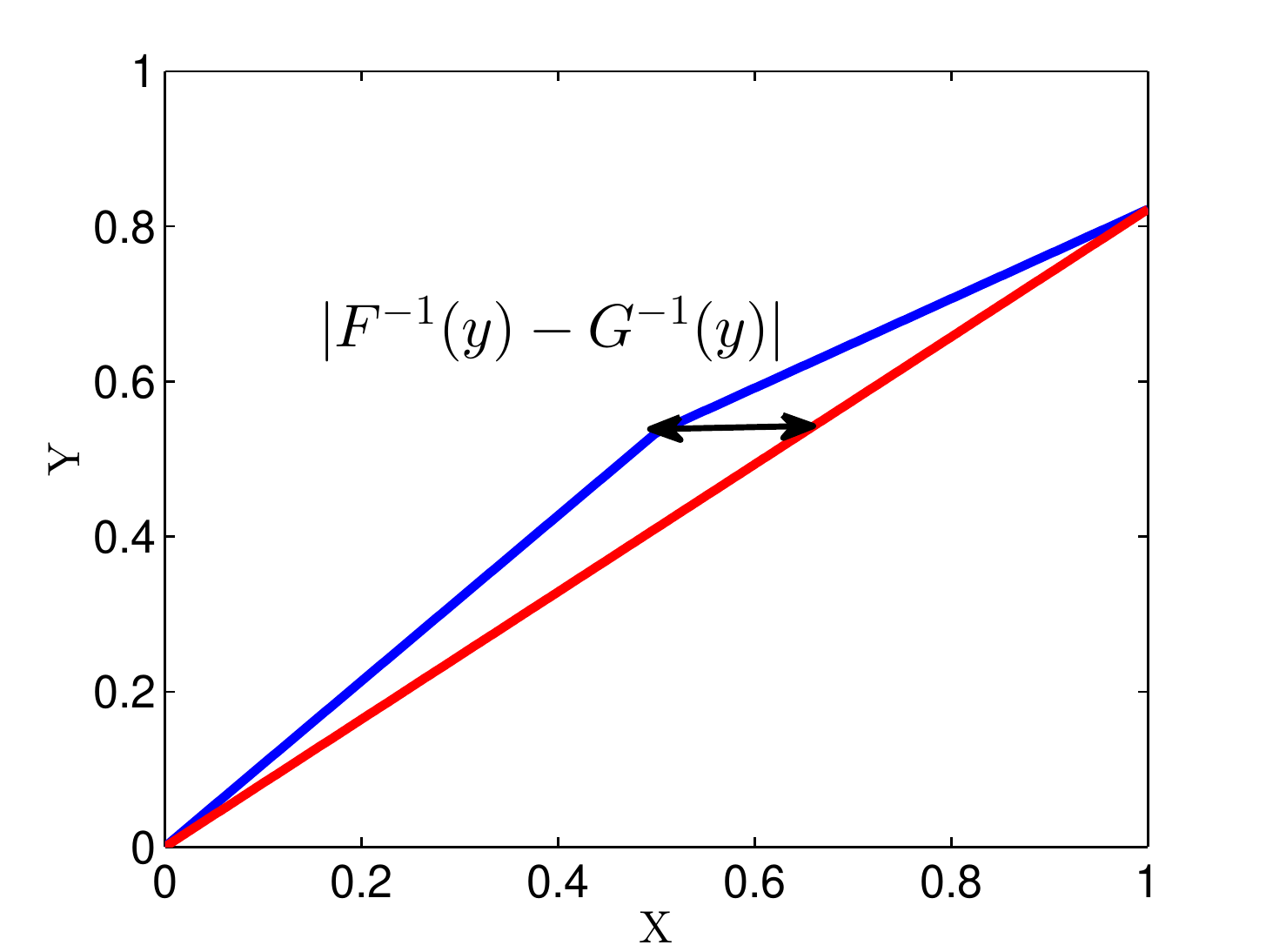}\label{fig:cdf}}
\subfigure[]{\includegraphics[width = 0.49\textwidth]{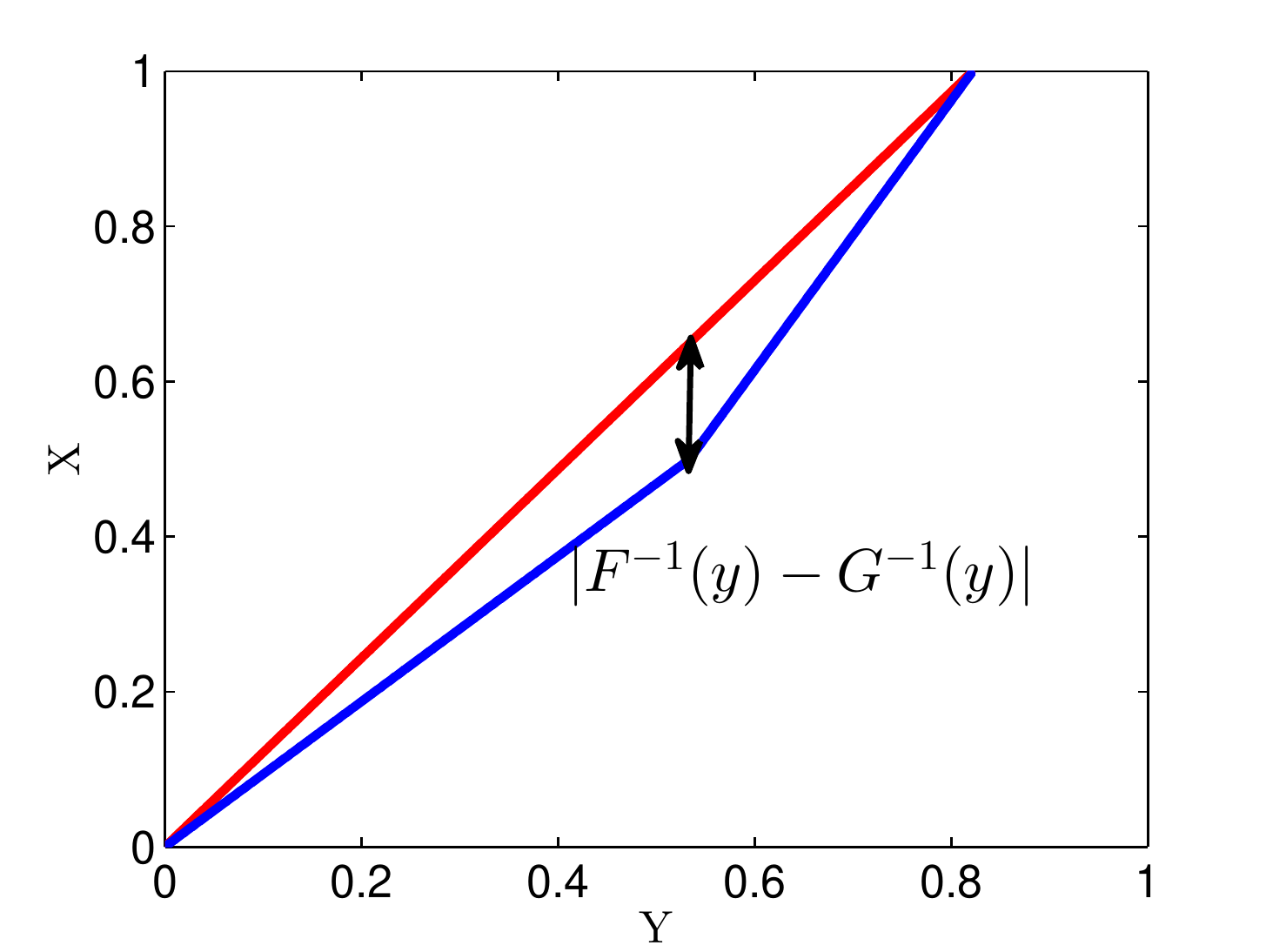}\label{fig:invcdf}}
\caption{\subref{fig:cdf}~Cumulative distribution functions $F(x)$ (red),  $G(x)$ (blue) and \subref{fig:invcdf}~the inverse  functions~$F^{-1}(y)$ (red), $G^{-1}(y)$ (blue).}
\label{fig:real line}
\end{figure}
Without loss of generality, we take $g=1$ on $[0,1]$. Figure~\ref{fig:noise density} shows the effect of the noise. For $x\in \left(\frac{i-1}{N},\frac{i}{N}\right]$, \( r^N(x) \equiv r_i\), with each $r_i$ drawn from $U[-c,c]$. As $N\rightarrow \infty$, $r^N(x)$ approximates the noise function $r(x)$ on $[0,1]$. For any $x_0 \in [0,1]$, $r(x_0)$ is a random variable with uniform distribution $U[-c,c]$.

\begin{proof}[Proof of Theorem~\ref{1D_noise}]
For each $i$, $r_i$ is a random variable of uniform distribution $U[-c,c]$, $0<c<\min g$. Thus, we have $\mathbb{E}r_i = 0$ and $\mathbb{E}\bar{r} = 0$. 

Let $h = 1/N$ and $x_i = ih$ for $i = 0,\ldots,N$.  Then the noisy density function is given by
\[ f_N(x) = 1+r_i, \quad x\in (x_{i-1},x_i]. \]
We begin by calculating the Wasserstein metric between $f_N$ and the constant $g_N = 1 + \bar{r}^N$, which share the same mass.

In order to make use of~\eqref{eq:cost1D}, we derive the cumulative distribution function and its inverse for both $f_N$ and $g_N$:
\begin{align*}
F_N(x) &= \sum\limits_{j=1}^{i-1} (1+r_j)h + (1+r_i)(x-x_{i-1}), & x \in (x_{i-1},x_i]\\
G_N(x) &= (1+\bar{r}^N)x, & x\in[0,1] \\
F_N^{-1}(x) &= \frac{x + \left((i-1)r_i - \sum_{j=1}^{i-1} r_j\right)h}{1+r_i}, & x \in \left(\sum_{j=1}^{i-1}(1+r_j)h, \sum_{j=1}^{i}(1+r_j)h\right] \\
G_N^{-1}(x) &= \frac{x}{1+\bar{r}^N}, & x \in [0,1+\bar{r}^N],
\end{align*}
where $1 \leq i \leq N$.
%
%
%
%
%
%
%

Then we can bound the squared Wasserstein metric by
\bq
W_2^2( f_N, g_N) = \int_0^{1+\bar{r}}|F_N^{-1}(t)-G_N^{-1}(t)|^2dt
\leq  \frac{2h^3}{(1-c)^2}\sum_{i=1}^N\left(\sum_{l=1}^i r_l - ih\sum_{k=1}^N r_k\right)^2.
\eq
Since the noise $\{r_i\}_{i=1}^N$ is i.i.d., we obtain the following upper bound for the expectation of the Wasserstein metric: 
\[
\mathbb{E} W_2^2(f_N,g_N) \leq C\cdot h^3 \cdot \sum_{i=1}^N i \cdot \mathbb{E}r_1^2 \leq \frac{C_2}{N}.
\]

We can similarly establish a lower bound so that
\begin{equation}
\frac{C_1}{N} \leq \mathbb{E}W_2^2(f_N, g_N) \leq \frac{C_2}{N}
\end{equation}
where $C_1$ and $C_2$ only depend on $c$.

The density functions $f_N$ and $g_N$ have total mass $1+\bar{r}^N$ and must be rescaled to mass one in order to obtain the desired result.
 Recalling that $g = g/(1+\bar{r}^N)$, we can rescale the squared Wasserstein metric~\cite[Proposition~7.16]{Villani} to obtain
\[
W_2^2(f_N/\langle f_N \rangle, g) = \left(\frac{1}{1+\bar{r}^N}\right)^2  W_2^2(f_N, g_N) 
\]
where
\[  \left(\frac{1}{1+c}\right)^2 \leq \left(\frac{1}{1+\bar{r}^N}\right)^2 \leq \left(\frac{1}{1-c}\right)^2.\]
Thus we conclude that $\mathbb{E}W_2^2(f_N/\langle f_N \rangle, g) = \bO(\frac{1}{N})$.
\end{proof}

\begin{rem}
The $L_2$ norm is significantly more sensitive to noise in this setting since $\mathbb{E}L_2^2(f_N,g_N)=\mathbb{E}||f_N-g_N||_2^2 = \mathbb{E}\left(\frac{1}{N}\sum_{i=1}^N|r_i|^2 \right)= O(1)$.
\end{rem}    

\subsection{Higher dimensions}

The analysis of the Wasserstein metric becomes much more difficult in higher dimensions. However, we can still analyze the effects of noise through the computation of an upper bound on the metric.

From the definition of the quadratic Wasserstein metric~\eqref{eq:W2}, it is clear that any transport map $T$ satisfies the inequality
		\[W_2^2(f,g)\leq \int\left|x-T(x)\right|^2f(x)dx. \]
     
     Consider the following two-dimensional example on the domain $\Omega = [0,1]\times[0,1]$ with the constant density function $g = 1$. Consider the noise function $r$ such that for each $(x,y)\in\Omega$, $r(x,y)$ is a random variable with uniform distribution on $[-c , c]$, $0<c<1$. We define the noisy density function $f = g + r$ and assume that \(\int_{\Omega}f = \int_{\Omega}g \).
     
    We use the Wasserstein metric to measure the difference between $g$ and its noisy version $f$. Since strong convergence in distribution implies convergence of the Wasserstein metric, we can approximate the density function $f$ by the piecewise constant function ${f_N}$ for the convenience of calculation. 
\[
f_N(x,y) = 1 + r_{ij},\quad x_i=\frac{i}{N} < x \leq\frac{i+1}{N}=x_{i+1},\quad y_j=\frac{j}{N} < y \leq\frac{j+1}{N}=y_{j+1}.
\]

\begin{figure}
\centering
\subfigure[]{\includegraphics[width=0.45\textwidth]{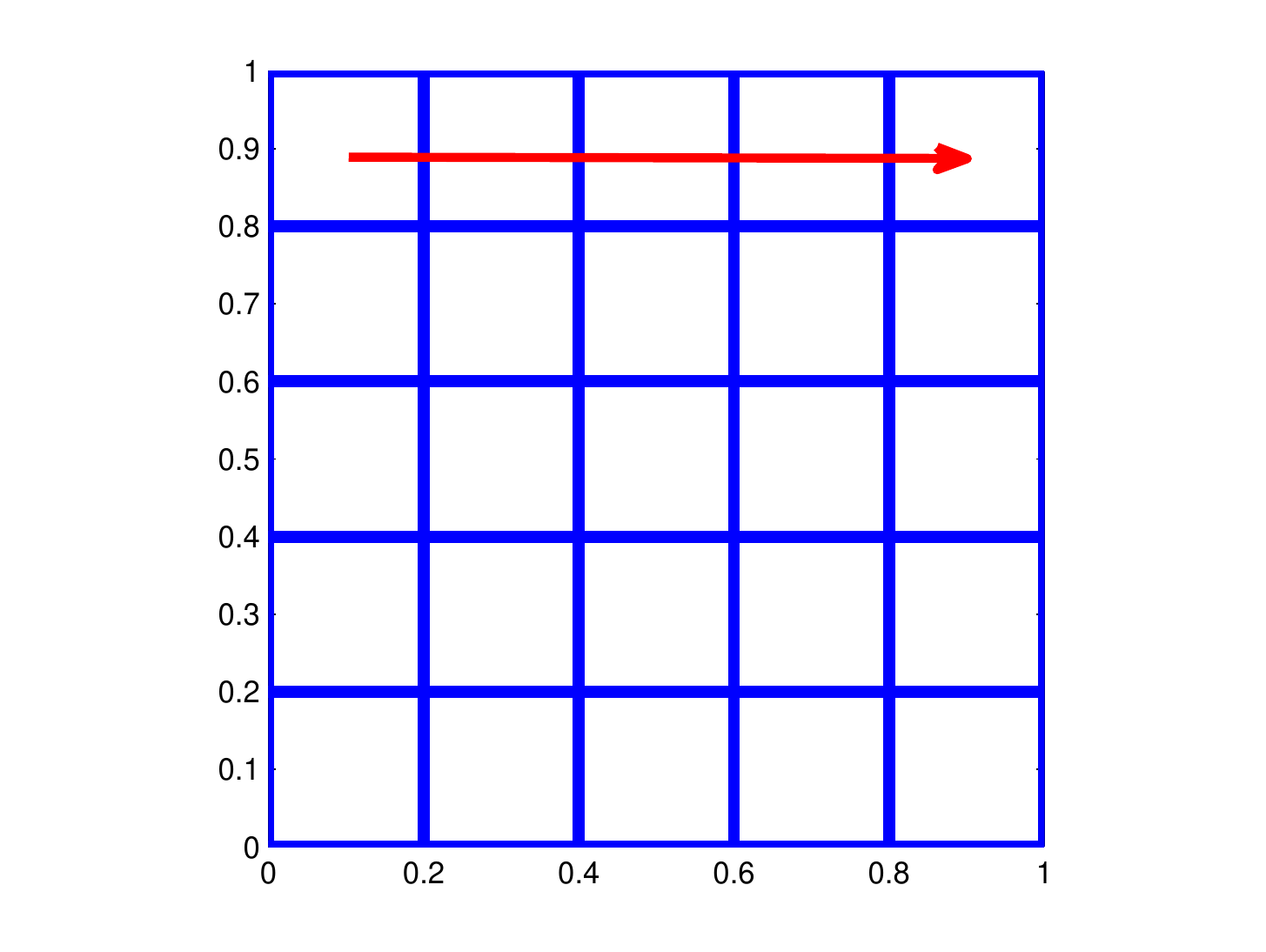}\label{fig:T1}}
\subfigure[]{\includegraphics[width=0.45\textwidth]{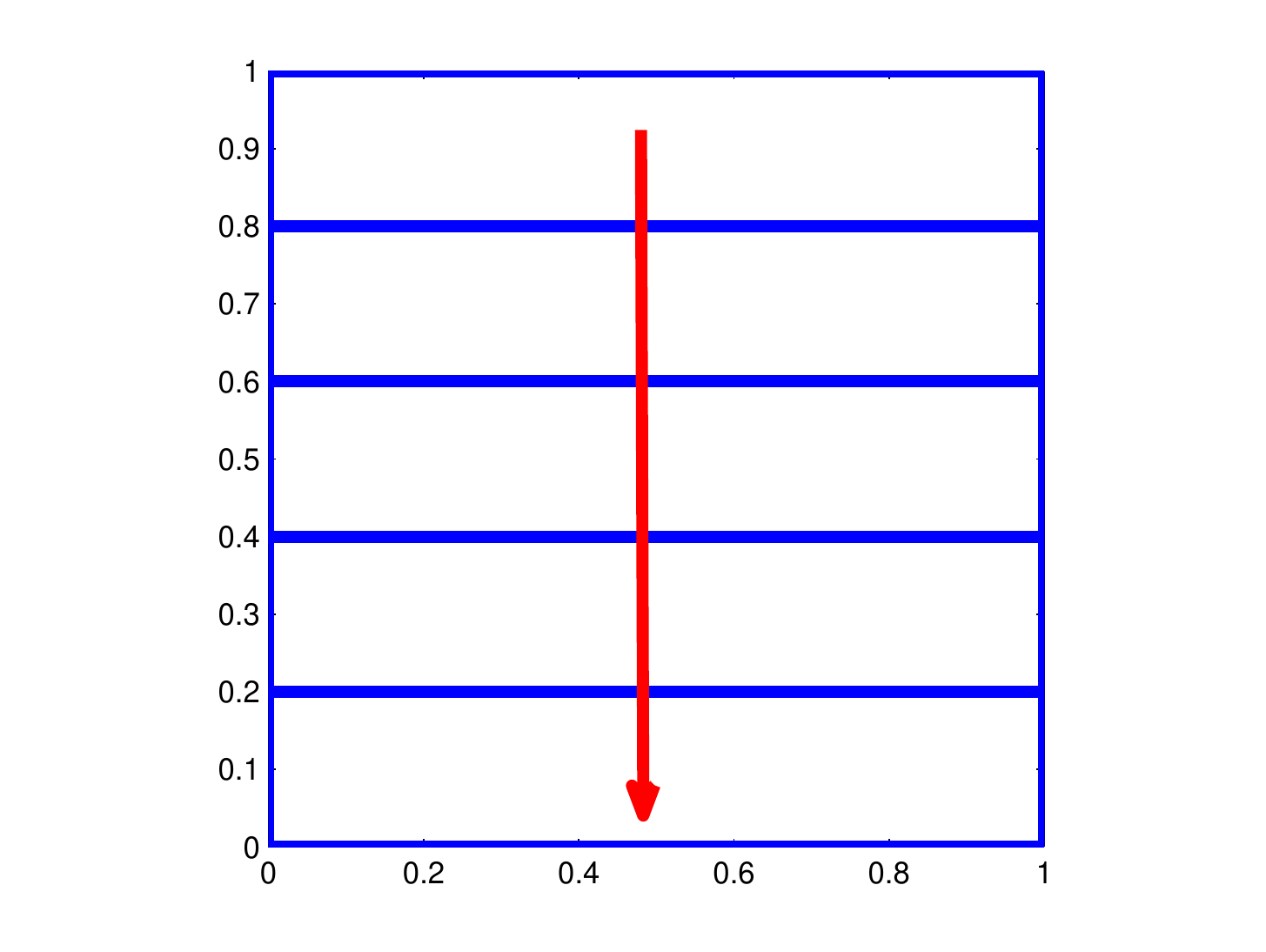}\label{fig:T2}}
\caption{\subref{fig:T1}~The optimal map for each row: $T_x = T_i$ for $x_i < x \leq x_{i+1}$ and  \subref{fig:invcdf}~the optimal map in y direction: $T_y$}
\label{fig:T}
\end{figure}

One approach to rearranging all the mass from $f_N$ to $g$ is to define $T$ in two steps as in Figure~\ref{fig:T}. First, with $y$ fixed, one can find the optimal map $T_x$ that averages each row. This is equivalent to a 1D optimal transport problem. Each row $i$ is mapped into a uniform density after rearrangement by the optimal map $T_i$. Secondly, with $x$ fixed, one can average the the density values of all the rows. Again, this is a 1D optimal transport problem and we have an explicit form for the optimal map $T_y$. The resulting map $T_N$ that rearranges $f_N$ to $g$ is $T_y \circ T_x$. Here $T_x = T_i$ for $x_{i-1} < x \leq x_i$, $i=1,\dots,N$.

The rearrangement determined by $T_N$ is not optimal, but does provide an upper bound on the value of the Wasserstein metric:
\bq
\mathbb{E} W_2^2(f_N,g) \leq  
\mathbb{E}\left(\iint\left(\left|x-T_x(x)\right|^2+\left|y-T_y(y)\right|^2\right)f_N(\bold{x})dxdy\right) \propto  \bO\left(\frac{1}{N}\right).
\eq

Finally, by the Lebesgue dominated convergence theorem, 
\bq
\mathbb{E} W_2^2(f,g) \leq \lim_{N\rightarrow \infty}   \mathbb{E}\left(\int_{\R^2}\left|\bold{x}-T_N(\bold{x})\right|^2f_N(\bold{x})d\bold{x}\right)=0.
\eq

For higher dimensions $n \geq 3$, we can similarly reduce the problem to several 1D optimal transport problems. The ultimate goal is to find a particular map that is not optimal, but that provides an upper bound that goes to zero as the mesh is refined.

\section{Numerical Computation of the Wasserstein metric}\label{sec:numerics}
We are interested in computing the  Wasserstein metric between two distributions $f$, $g$, which are supported on a rectangle $X$.  This can be accomplished via the solution of the \MA equation with non-homogeneous Neumann boundary conditions:
\bq\label{eq:MAA}
\begin{cases}
\det(D^2u(x)) = f(x)/g(\nabla u(x)) + \langle u \rangle,& x\in X\\
\nabla u(x) \cdot \nu = x\cdot \nu, & x \in \partial X\\
u \text{ is convex.}
\end{cases}
\eq
\begin{rem}
The Neumann boundary condition is easily generalised to the situation where $f$ and $g$ are supported on different rectangles~\cite{FroeseTransport}.
\end{rem}

The squared Wasserstein metric is then given by
\bq\label{eq:WassMA}
W_2^2(f,g) = \int_X f(x)\abs{x-\nabla u(x)}^2\,dx.
\eq

We solve the \MA equation numerically using an almost-monotone finite difference method relying on the following reformulation of the \MA operator, which automatically enforces the convexity constraint~\cite{FroeseTransport}.
\begin{multline}\label{eq:MA_convex}
{\det}^+(D^2u) = \\ \min\limits_{\{v_1,v_2\}\in V}\left\{\max\{u_{v_1,v_1},0\} \max\{u_{v_2,v_2},0\}+\min\{u_{v_1,v_1},0\} + \min\{u_{v_2,v_2},0\}\right\}
\end{multline}
where $V$ is the set of all orthonormal bases for $\R^2$.

Equation~\eqref{eq:MA_convex} can be discretized by computing the minimum over finitely many directions $\{\nu_1,\nu_2\}$, which may require the use of a wide stencil.  For simplicity and brevity, we describe a compact version of the scheme and refer to~\cite{FroeseTransport,FOFiltered} for complete details.

We begin by introducing the finite difference operators
\begin{align*}
[\Dt_{x_1x_1}u]_{ij} &= \frac{1}{dx^2} 
\left(
{u_{i+1,j}+u_{i-1,j}-2u_{i,j}}
\right)
\\
[\Dt_{x_2x_2}u]_{ij} &= \frac{1}{dx^2}
\left(
u_{i,j+1}+u_{i,j-1}-2u_{i,j}
\right)
\\
[\Dt_{x_1}u]_{ij} &= \frac{1}{2dx}
\left(
u_{i+1,j}-u_{i-1,j}
\right)\\
[\Dt_{x_2}u]_{ij} &= \frac{1}{2dx}
\left(
u_{i,j+1}-u_{i,j-1}
\right)\\
[\Dt_{vv}u]_{ij} &= \frac{1}{2dx^2}\left(u_{i+1,j+1}+u_{i-1,j-1}-2u_{i,j}\right)\\
[\Dt_{\vp\vp}u]_{ij} &= \frac{1}{2dx^2}\left(u_{i+1,j-1}+u_{i+1,j-1}-2u_{i,j}\right)\\
[\Dt_{v}u]_{ij} &= \frac{1}{2\sqrt{2}dx}\left(u_{i+1,j+1}-u_{i-1,j-1}\right)\\
[\Dt_{\vp}u]_{ij} &= \frac{1}{2\sqrt{2}dx}\left(u_{i+1,j-1}-u_{i-1,j+1}\right).
\end{align*}

In the compact version of the scheme, the minimum in~\eqref{eq:MA_convex} is approximated using only two possible values.  The first uses directions aligning with the grid axes.
\begin{multline}\label{MA1}
MA_1[u] = \max\left\{\Dt_{x_1x_1}u,\delta\right\}\max\left\{\Dt_{x_2x_2}u,\delta\right\} \\+ \min\left\{\Dt_{x_1x_1}u,\delta\right\} + \min\left\{\Dt_{x_2x_2}u,\delta\right\} - f / g\left(\Dt_{x_1}u, \Dt_{x_2}u\right) - u_0.
\end{multline}
Here $dx$ is the resolution of the grid, $\delta>K\Delta x/2$ is a small parameter that bounds second derivatives away from zero, $u_0$ is the solution value at a fixed point in the domain, and $K$ is the Lipschitz constant in the $y$-variable of $f(x)/g(y)$.

For the second value, we rotate the axes to align with the corner points in the stencil, which leads to
\begin{multline}\label{MA2}
MA_2[u] = \max\left\{\Dt_{vv}u,\delta\right\}\max\left\{\Dt_{\vp\vp}u,\delta\right\} + \min\left\{\Dt_{vv}u,\delta\right\} + \min\left\{\Dt_{\vp\vp}u,\delta\right\}\\ - f / g\left(\frac{1}{\sqrt{2}}(\Dt_{v}u+\Dt_{\vp}u), \frac{1}{\sqrt{2}}(\Dt_{v}u-\Dt_{\vp}u)\right) - u_0.
\end{multline}

Then the compact monotone approximation of the \MA equation is
\bq\label{eq:MA_compact} M_M[u] \equiv -\min\{MA_1[u],MA_2[u]\} = 0. \eq
We also define a second-order non-monotone approximation, obtained from a standard centred difference discretisation,
\bq\label{eq:MA_nonmon} M_N[u] \equiv -\left((\Dt_{x_1x_1}u)(\Dt_{x_2x_2}u)-(\Dt_{x_1x_2}u^2)\right) + f/g\left(\Dt_{x_1}u,\Dt_{x_2}u\right) + u_0 = 0.\eq

These are combined into an almost-monotone filtered approximation of the form
\bq\label{eq:MA_filtered} M_F[u] \equiv M_M[u] + \epsilon S\left(\frac{M_N[u]-M_M[u]}{\epsilon}\right) = 0 \eq
where $\epsilon$ is a small parameter and the filter $S$ is given by
\bq\label{eq:filter}
S(x) = \begin{cases}
x & \abs{x} \leq 1 \\
0 & \abs{x} \ge 2\\
-x+ 2  & 1\le x \le 2 \\
-x-2  & -2\le x\le -1.
\end{cases} 
\eq

The Neumann boundary condition is implemented using standard one-sided differences.

Once the discrete solution $u_h$ is computed, the squared Wasserstein metric is approximated via
\bq\label{eq:WassDiscrete}  W_2^2(f,g) \approx \sum\limits_{j=1}^m (x_j-D_{x_j}u_h)^T\diag(f)(x_j-D_{x_j}u_h). \eq

The computation of the discrete solution of~\eqref{eq:MA_filtered} requires the solution of a large system of nonlinear algebraic equations.  This is accomplished using Newton's method, which requires the Jacobian of the discrete scheme.  The Jacobian of the filtered scheme can be expressed as
\bq\label{eq:JacFilter}
\nabla M_F[u] = \left(1-S'\left(\frac{M_N[u]-M_M[u]}{\epsilon}\right)\right)\nabla M_M[u] + S'\left(\frac{M_N[u]-M_M[u]}{\epsilon}\right)\nabla M_N[u].
\eq
The (formal) Jacobians of the monotone and non-monotone components are given by
\begin{align*}
\nabla_u M_1[u] &= \left(\max\{\Dt_{x_2x_2},\delta\}\one_{\Dt_{x_1x_1}>\delta} + \one_{\Dt_{x_1x_1}\leq\delta}\right)\Dt_{x_1x_1}\\
&\phantom{=} + \left(\max\{\Dt_{x_1x_1},\delta\}\one_{\Dt_{x_2x_2}>\delta} + \one_{\Dt_{x_2x_2}\leq\delta}\right)\Dt_{x_2x_2}\\
&\phantom{=}-\frac{f}{g\left(\Dt_{x_1} u,\Dt_{x_2}u\right)^2} \nabla g\left(\Dt_{x_1} u,\Dt_{x_2}u\right)\cdot(\Dt_{x_1},\Dt_{x_2}) - \one_{x=x_0},\\
\nabla_u M_2[u] &= \left(\max\{\Dt_{\vp\vp},\delta\}\one_{\Dt_{vv}>\delta} + \one_{\Dt_{vv}\leq\delta}\right)\Dt_{vv}\\
&\phantom{=} + \left(\max\{\Dt_{vv},\delta\}\one_{\Dt_{\vp\vp}>\delta} + \one_{\Dt_{\vp\vp}\leq\delta}\right)\Dt_{\vp\vp}\\
&\phantom{=}-{f}/{g\left(\frac{1}{\sqrt{2}}(\Dt_{v}u+\Dt_{\vp}u), \frac{1}{\sqrt{2}}(\Dt_{v}u-\Dt_{\vp}u)\right)^2}\\
&\phantom{==} \nabla g\left(\frac{1}{\sqrt{2}}(\Dt_{v}u+\Dt_{\vp}u), \frac{1}{\sqrt{2}}(\Dt_{v}u-\Dt_{\vp}u)\right)\\
&\phantom{==}\cdot\left(\frac{1}{\sqrt{2}}(\Dt_{v}+\Dt_{\vp}), \frac{1}{\sqrt{2}}(\Dt_{v}-\Dt_{\vp})\right) - \one_{x=x_0},\\
\nabla_u M_M[u] &= -\one_{M_M[u]=-M_1[u]}\nabla_u M_1[u] - \one_{M_M[u]=-M_2[u]} \nabla_u M_2u], \\
\nabla_{u} M_N[u] &= -(\Dt_{x_2x_2}u)\Dt_{x_1x_1} - (\Dt_{x_1x_1}u)\Dt_{x_2x_2} + 2(\Dt_{x_1x_2}u)\Dt_{x_1x_2} \\ 
&\phantom{=}+\frac{f}{g\left(\Dt_{x_1} u,\Dt_{x_2}u\right)^2} \nabla g\left(\Dt_{x_1} u,\Dt_{x_2}u\right)\cdot(\Dt_{x_1},\Dt_{x_2}) + \one_{x=x_0}.
\end{align*}

The availability of these Jacobians will become key in~\autoref{sec:linear}, where we will use these results to compute the Frechet gradient of the Wasserstein metric.  This, in turn, is needed for the minimization in the computational examples of~\autoref{sec:results}.

\section{Computation of Frechet Gradient}\label{sec:linear}

Our goal is to minimise the Wasserstein metric between computed data $f(v)$ and observed data $g$, where $f$ depends on a set of parameters $v$.  In order to do this efficiently, we will require the gradient of the squared Wasserstein metric with respect to the unknown parameters.

Our main focus here is computation of the Fr\'{e}chet gradient of the squared Wasserstein metric with respect to the data $f$, which is new in the context of full waveform inversion.  The gradient needed for the minimization is then obtained through the composition
\[ \nabla_f W_2^2(f(v)) \nabla_v f(v). \]
 As long as $\nabla_f W_2^2$ can be computed efficiently, techniques such as the adjoint state method can be used to efficiently construct the required gradient~\cite{Plessix}. 

In the present work, our focus is on the use of optimal transportation techniques, rather than on the use of a particular forward model for producing the data $f(v)$.  In the computations of \autoref{sec:results}, we will present the minimization for problems involving several different models.  For simplicity, and to keep the focus on the properties of the Wasserstein metric, we will simply use a forward difference approximation to estimate $\nabla_vf$.

Two different approaches are possible here.  One option is to directly linearize the Wasserstein metric, then discretize the result.  A second approach, which we pursue here, is to linearize the discrete approximation of the Wasserstein metric.  A key advantage of this approach is that it allows us to make use of the Jacobian~\eqref{eq:JacFilter} that is already being constructed in the process of solving the \MA equation.  We also argue that this is the correct gradient since our approach to full waveform inversion is exactly solving the optimization problem~\eqref{eq:fwi} where the misfit function~$d(f,g)$ is given by a discrete approximation to the squared Wasserstein metric.

Using the finite difference matrices introduced in \autoref{sec:numerics}, we can express the discrete Wasserstein metric as
\bq\label{eq:wassDisc}
d(f,g) = \sum\limits_{j=1}^n (x_j-D_{x_j}u_f)^T\diag(f)(x_j-D_{x_j}u_f)
\eq
where the potential $u_f$ satisfies the discrete \MA equation
\[\text{M}[u_f] = 0.\]  

\begin{lemma}[Frechet gradient of discrete Wasserstein metric]
The Frechet gradient of the discretized Wasserstein metric~\eqref{eq:wassDisc} is given by
\[ \nabla_f d(f,g) = \sum\limits_{j=1}^n\left[-2\nabla M_F^{-1}[u_f]^TD_{x_j}^T\diag(f) + \diag(x_j-D_{x_j}u_f)\right](x_j - D_{x_j}u_f). \]
\end{lemma}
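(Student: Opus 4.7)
The plan is to compute the Fr\'echet derivative of $d(f,g)$ with respect to $f$ via the chain rule, carefully separating the explicit dependence on $f$ through the factor $\diag(f)$ from the implicit dependence through the Monge--Amp\`ere potential $u_f$. Writing $r_j \equiv x_j - D_{x_j} u_f$, the discretized metric becomes
\[
d(f,g) \;=\; \sum_{j=1}^n r_j^T \diag(f)\, r_j \;=\; \sum_{j=1}^n \sum_i f_i\, r_{j,i}^2,
\]
so that any variation $\delta f$ produces two contributions: one from varying $\diag(f)$ with $u_f$ frozen, and one from varying $u_f$ through its dependence on $f$.

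For the explicit contribution, freezing $u_f$ and differentiating the quadratic form gives component $k$ equal to $\sum_j r_{j,k}^2$, i.e.\ the term $\sum_j \diag(r_j)\, r_j$ in the claimed formula. For the implicit contribution, I would invoke the implicit function theorem on the defining equation $M_F[u_f] = 0$ to obtain
\[
\frac{\partial u_f}{\partial f} \;=\; -\,\nabla M_F[u_f]^{-1}\, \nabla_f M_F[u_f],
\]
where $\nabla M_F[u_f]$ is precisely the Jacobian~\eqref{eq:JacFilter} already assembled for the Newton solve, so its nonsingularity at $u_f$ is available from the discussion in Section~\ref{sec:numerics}. Chain-ruling through $r_j = x_j - D_{x_j} u_f$ then produces a contribution $-2\,(\partial u_f/\partial f)^T D_{x_j}^T \diag(f)\, r_j$ per $j$, and substituting the implicit-function expression yields exactly the $-2\,\nabla M_F^{-1}[u_f]^T D_{x_j}^T \diag(f)\, r_j$ term of the stated formula (with the factor $\nabla_f M_F$ from the explicit source $-f/g(\nabla u)$ absorbed into the leading sign and notation consistently with the adjoint-state identity).

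Adding the two contributions and factoring out $r_j = x_j - D_{x_j} u_f$ on the right gives the announced expression. The main obstacle is the bookkeeping associated with the implicit differentiation: making sure the factor $\nabla_f M_F[u_f]$ from the implicit function theorem is treated consistently with the convention used to write the claimed formula, and verifying that $\nabla M_F[u_f]$ is invertible at the discrete solution so that the implicit differentiation is legitimate. Beyond that step, the remaining computations are essentially bookkeeping against the definitions in~\eqref{eq:WassDiscrete} and~\eqref{eq:JacFilter}.
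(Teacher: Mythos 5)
Your proposal is correct and follows essentially the same route as the paper: split the first variation into the explicit $\diag(\delta f)$ term and the implicit term through $\delta u$, linearize $M_F[u_f]=0$ to express $\delta u$ via the Newton Jacobian $\nabla M_F[u_f]$, and substitute. The only difference is presentational --- you phrase the linearization as the implicit function theorem and flag the $\nabla_f M_F$ bookkeeping explicitly, whereas the paper simply writes $\nabla M_F[u_f]\,\delta u = \delta f$ and absorbs that factor by convention.
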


\begin{proof}
The first variation of the squared Wasserstein metric as
\[ \delta d = -2\sum\limits_{j=1}^n (D_{x_j} \delta u)^T \diag(f) (x_j - D_{x_j}u_f) + \sum\limits_{j=1}^n (x_j-D_{x_j}u_f)^T \diag(\delta f) (x_j - D_{x_j}u_f). \]

Linearizing the \MA equation, we have to first order
\[ \nabla M_F[u_f] \delta u = \delta f. \]
Here $\nabla M_F$ is the (formal) Jacobian of the discrete \MA equation, which is already being inverted in the process of solving the \MA equation via Newton's method~\eqref{eq:JacFilter}.  Then the gradient of the discrete squared Wasserstein metric can be expressed as
\[ \nabla_f d = \sum\limits_{j=1}^n\left[-2\nabla M_F^{-1}[u_f]^TD_{x_j}^T\diag(f) + \diag(x_j-D_{x_j}u_f)\right](x_j - D_{x_j}u_f). \]
\end{proof}

Notice that once the \MA equation itself has been solved, this gradient is easy to compute as it only requires the inversion of a single matrix that is already being inverted as a part of the solution of the \MA equation.

\section{Computational Results}

\label{sec:results}

In this section, we provide examples of the minimization of the Wasserstein metric between given data $g$ and a modeled signal $f(v)$ that depends on the unknown parameters $v$. Minimization is performed using the Matlab function fmincon, equipped with the gradient described in~\autoref{sec:linear}.

The wave equation~\eqref{eq:forward} is solved by using finite difference scheme for a defined initial wave field.

\begin{align*}
u_{n,m}^{l+1} = &-u_{n,m}^{l-1}+2u_{n,m}^{l}\\
 &+v_{n,m}^2\Delta t^2\left(\frac{u_{n+1,m}^{l}-2u_{n,m}^{l}+u_{n-1,m}^{l}}{\Delta x^2} + \frac{u_{n,m+1}^{l}-2u_{n,m}^{l}+u_{n,m-1}^{l}}{\Delta z^2}\right)
\end{align*}
with the initial conditions 
\[
u_{n,m}^{-1} = \\f(n\Delta x, m\Delta z),\quad u_{n,m}^{0} = f(n\Delta x, m\Delta z).
\]

Here $u_{n,m}^l$ is the wave field at the time $l\Delta t$ and at the spatial position $(n\Delta x, m\Delta z)$. $v_{n,m}$ is the velocity at $(n\Delta x, m\Delta z)$. The step size $\Delta t$ is chosen to satisfy the numerical stability condition:
\[
\min(\Delta x, \Delta z)>\sqrt{2}\Delta t \max(v).
\]

To ensure the data to be positive which is a requirement for objects in optimal transportation, we work with something akin to a  local amplitude by defining
\[ \tilde{f}(x,t) = \sqrt{\int_{t-\epsilon}^{t+\epsilon} u(x,0,s)^2\,ds}  \]
where $\epsilon=10\Delta t$.  Finally, this profile is normalised to produce a density function $f(x,t)$ that has unit mass. 

\subsection{Single layer model}
\label{sec:2para}
We first consider a material composed of a single layer of depth $h$ and velocity $v$.  
We define the data $f_{h,v}(s,t)$ to be the resulting data, which we obtain by solving the wave equation for $u_{h,v}$ and processing the results.

We consider the particular case of $h^* = 2$, $v^*=1$.
In order to define the target profile $g$, which mimics the observed data, we add noise $N(s,t)$ chosen uniformly at random from $[-M,M]$, 
\[ \tilde{g}(s,t) = \max\{u_{2,1}(s,t) + N(s,t),0\}, \]
where $M$ is approximately 2\% the maximum value of $f_{2,1}$.
See Figure~\ref{fig:layer1noisy10}.  Then our goal is to determine $h$ and $v$ that minimize
\[ W_2^2(f_{h,v}, g).  \]

We initialize with the guess $h = 2.5$ and $v = 1.75$ and perform minimization over the parameters $h$ and $v^{-1}$.  The convergence history is displayed in Figure~\ref{fig:convergence2}.  Despite the noise in the target profile, we recover the parameters $\tilde{h} = 2.2157$ and $\tilde{v}=1.0953$ after fifteen iterations, with a squared Wasserstein metric of $3.36\times10^{-4}$.  (The required stepsize in the minimization algorithm became too small to improve appreciably beyond this).   

For reference, we also compare the noisy target~$g$ with the exact signal~$f_{2,1}$ (without noise).  This yields a 
 a squared Wasserstein metric of $7.49\times10^{-4}$, so that the error in the recovered parameters can be explained by the noise.

\begin{figure}
\centering
\includegraphics[width=0.5\textwidth]{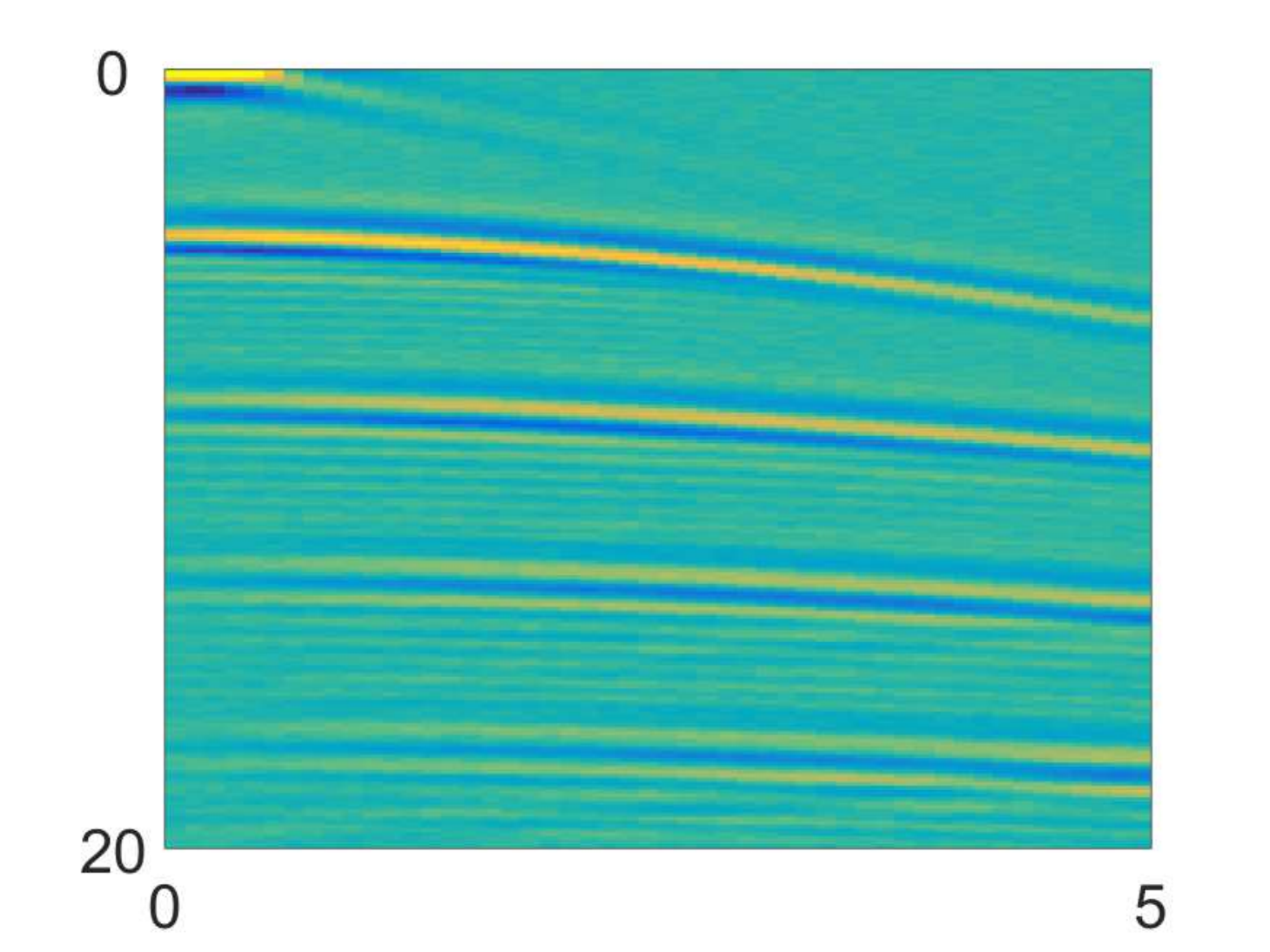}
\label{fig:layer1noisy10}
\caption{A signal produced from a single layer model with added noise.}
\end{figure}

\begin{figure}[htp]
\centering
\subfigure[]{\includegraphics[width=0.475\textwidth]{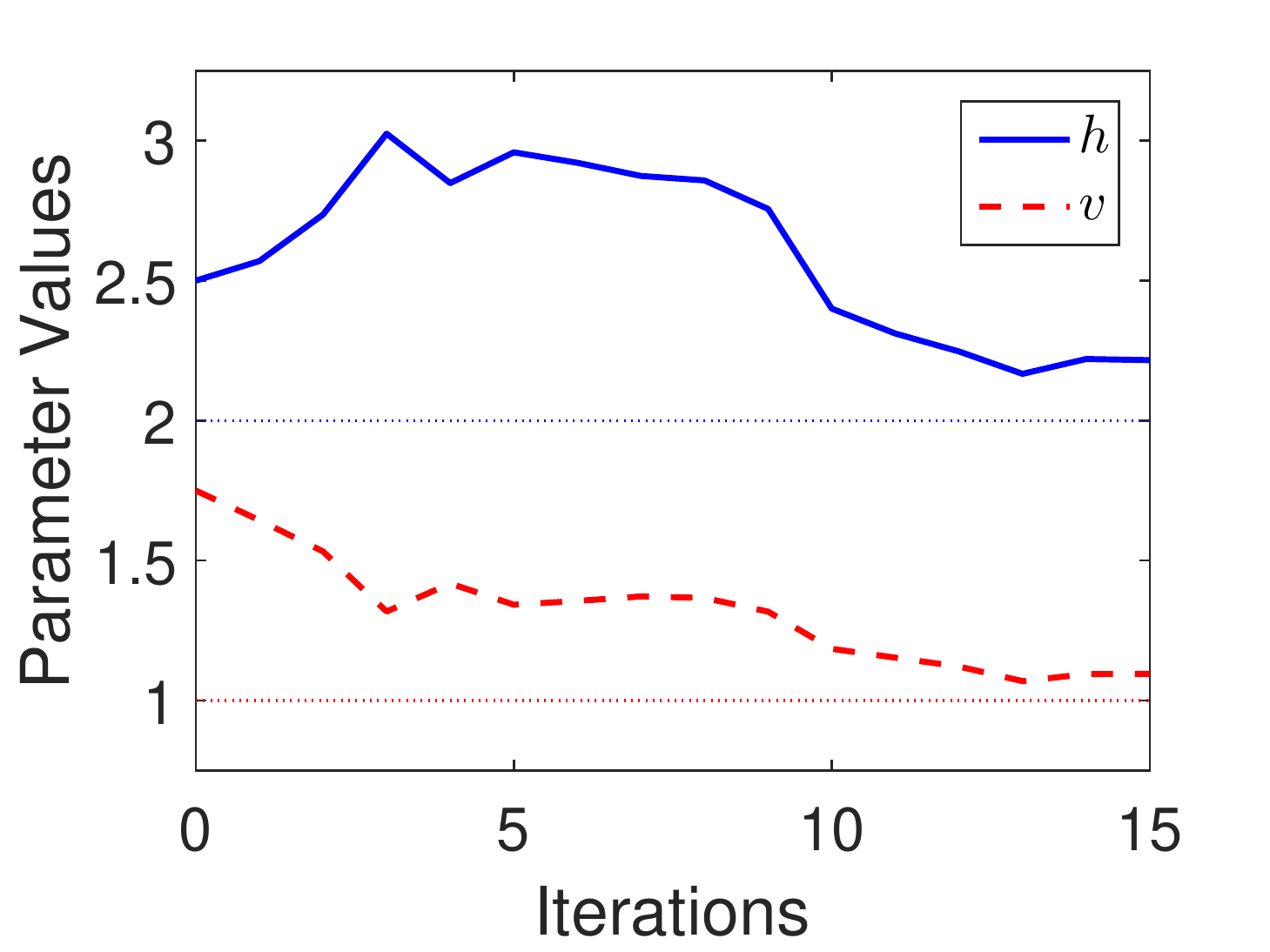}\label{fig:convergence1LayerPar}}
\subfigure[]{\includegraphics[width=0.475\textwidth]{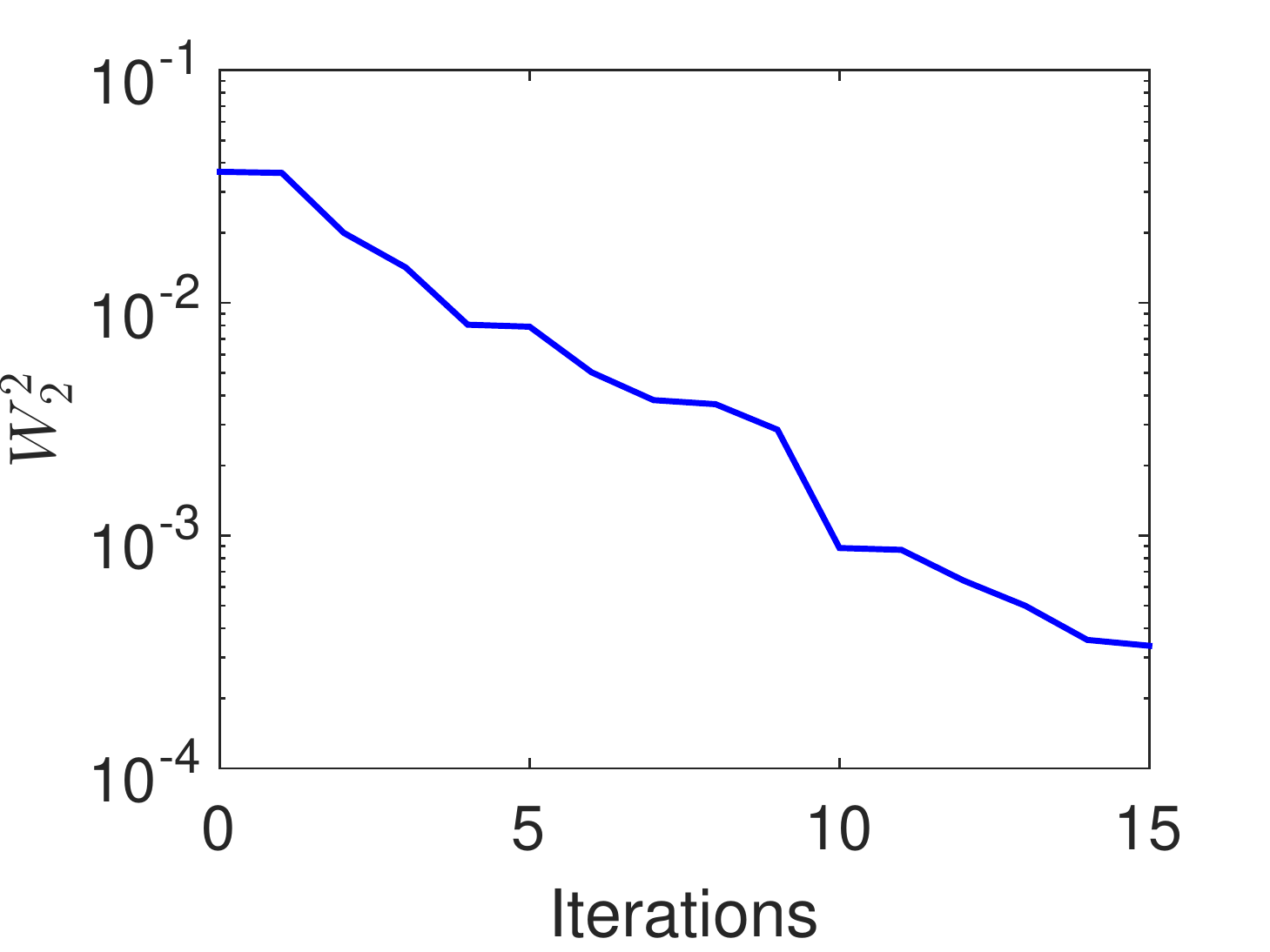}\label{fig:convergence1LayerW2}}
\caption{Convergence history for a single layer model.}
\label{fig:convergence1}
\end{figure}

\subsection{Two layer model}

Next, we consider the case where the material is composed of two different layers.  The top layer has depth $h_1$ and velocity $v_1$ while the bottom layer has depth $h_2$ and velocity $v_2$; see Figure~\ref{fig:commonshot}.  

We look at the particular case where the given target density $g$ is defined by the parameter values 
\[  h_1^* = 0.75, \quad v_1^* = 1, \quad h_2^* = 1, \quad v_2^* = 1.5. \]
As in the previous example, we add noise to this target.
The resulting signal is shown in Figure~\ref{fig:signal2layer}.

In this case, the distance $W_2$ depends on the four parameters $h_1,v_1^{-1},h_2,v_2^{-1}$.  We initialize with the guess $h_1 = 0.5$, $v_1 = 1.5$, $h_2 = 0.75$, and $v_2 = 2$.  
After 33 iterations, we recover the parameter values $\tilde{h}_1 = 0.772$, $\tilde{h}_2 = 0.991$, $\tilde{v}_1 = 1.0318$, and $\tilde{v}_2 = 1.519$ with a squared misfit value of $2.06\times10^{-5}$.  The convergence history is presented in Figure~\ref{fig:convergence2}.

As noted in~\cite{EFWass}, when the model involves both depth and velocity, the resulting distance can contain narrow valleys, and computing the minimum can require small stepsizes.  We were still able to effectively compute the minimum in this setting, but we expect that quasi-Newton methods would enable even faster convergence.

\begin{figure}[htp]
\centering
\subfigure[]{\includegraphics[width=0.45\textwidth]{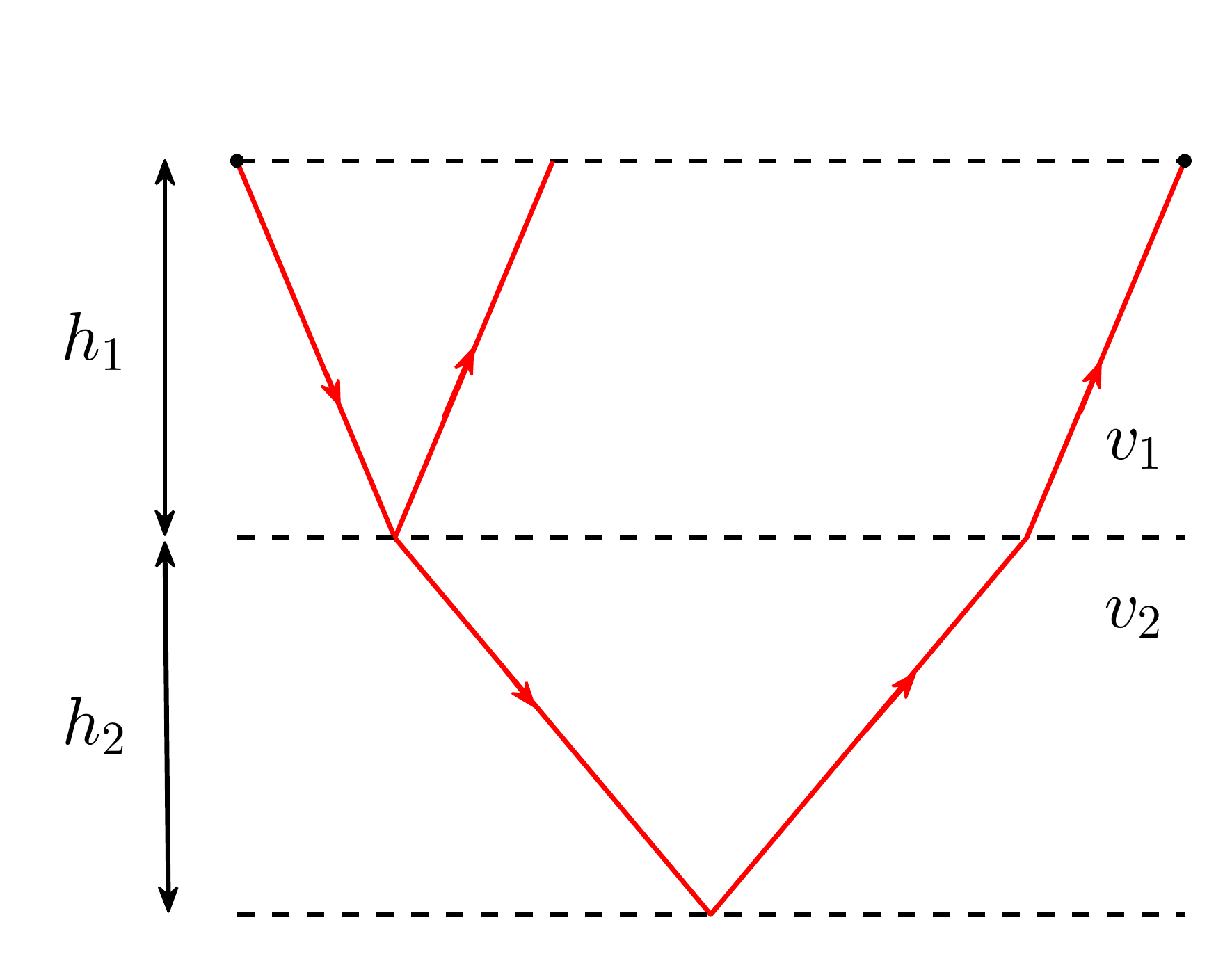}\label{fig:commonshot}}
\subfigure[]{\includegraphics[width=0.45\textwidth]{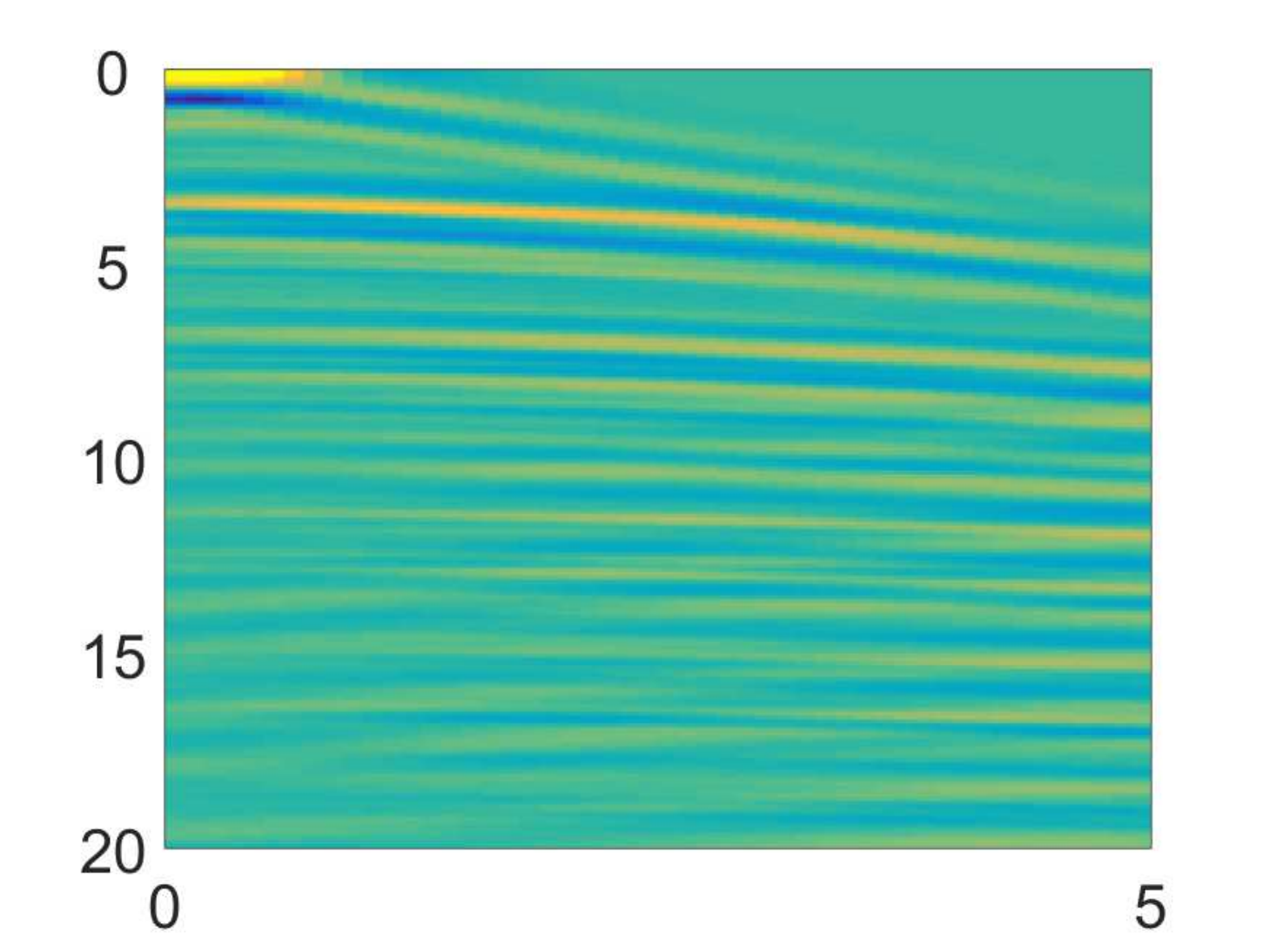}\label{fig:signal2layer}}
\caption{\subref{fig:commonshot}~A two-layer material and  \subref{fig:signal2layer}~the resulting signal.}
\label{fig:layer2}
\end{figure}

\begin{figure}[htp]
\centering
\subfigure[]{\includegraphics[height=0.45\textwidth]{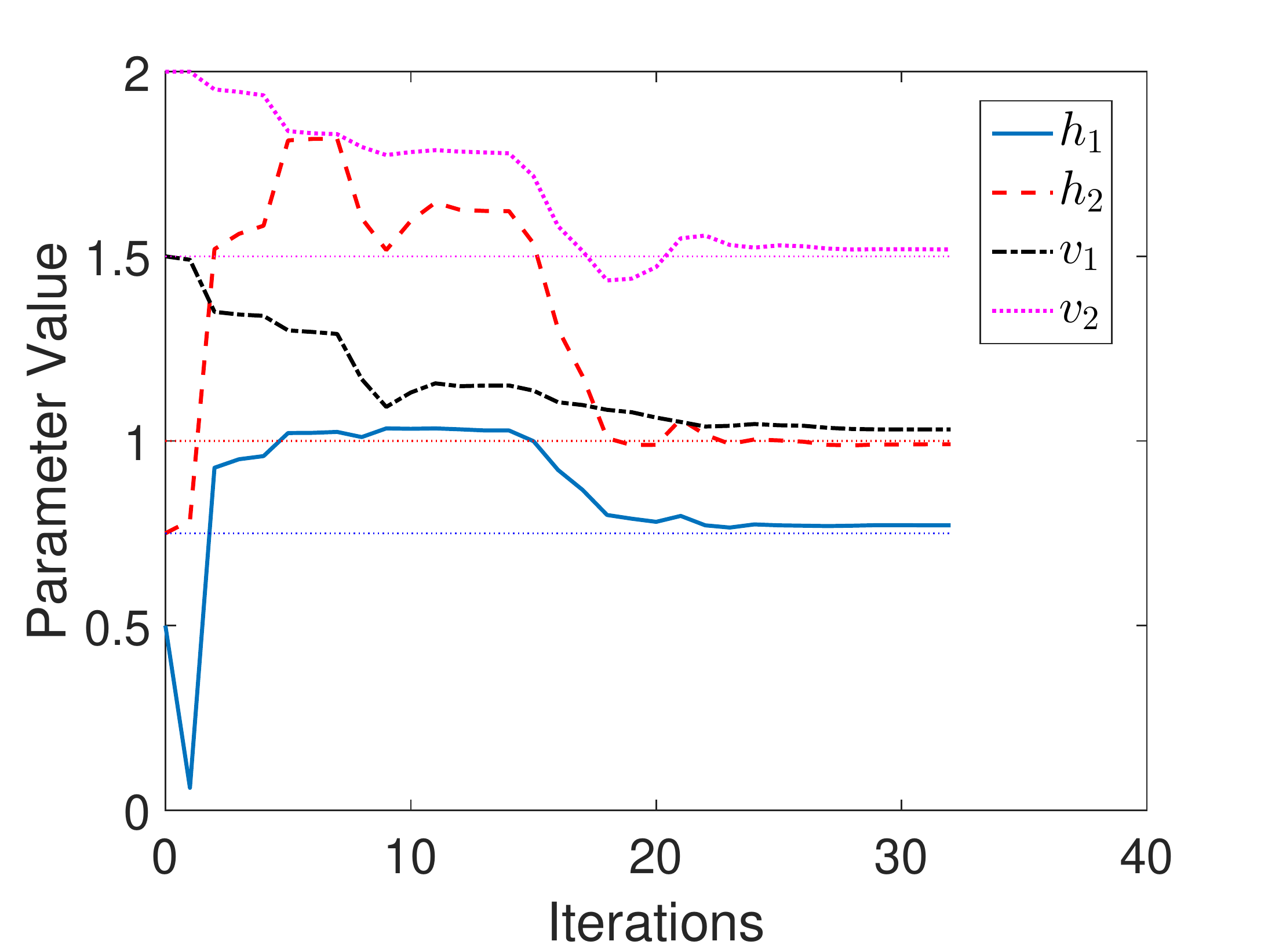}\label{fig:convergence2LayerPar}}
\subfigure[]{\includegraphics[height=0.45\textwidth]{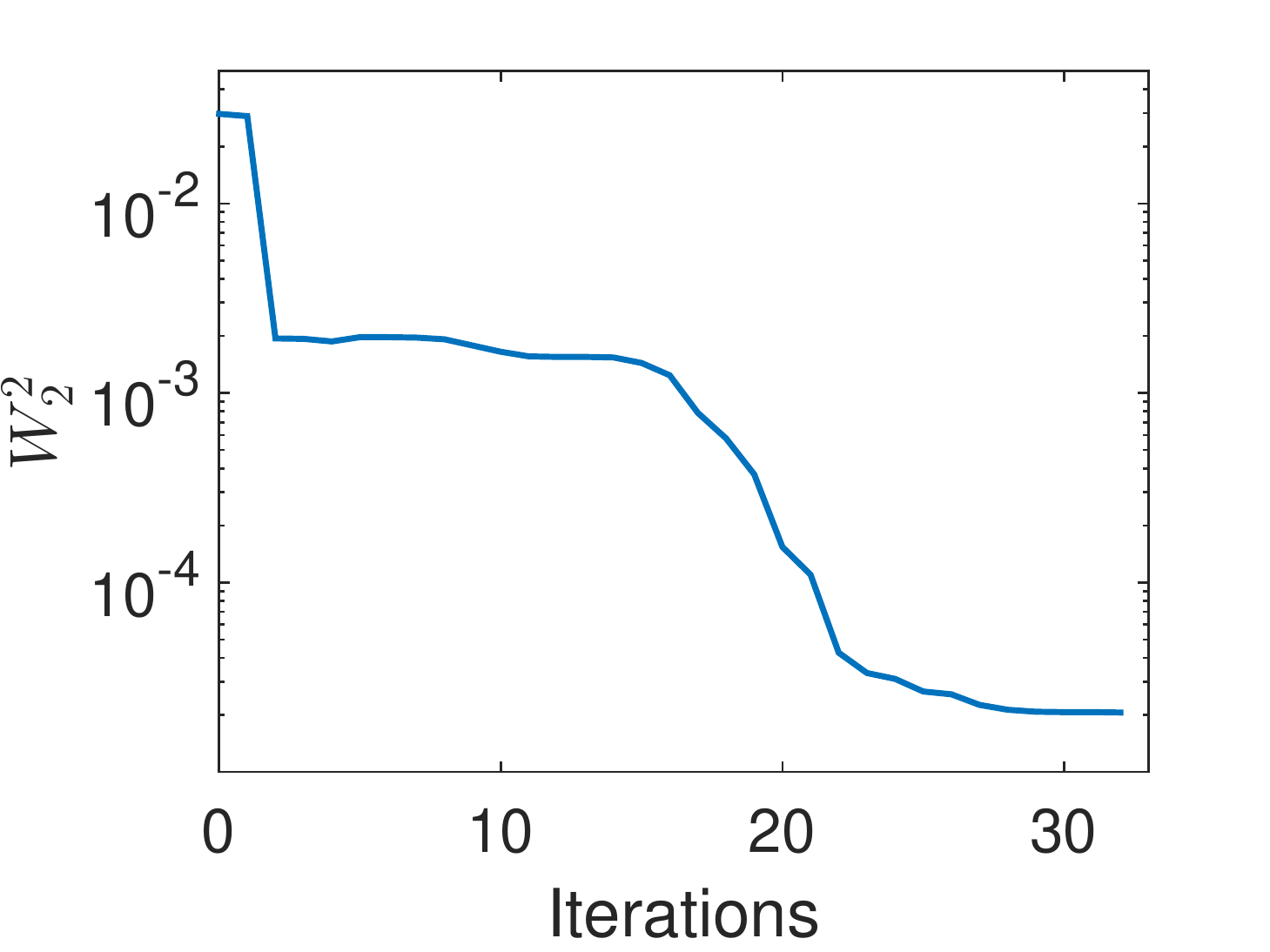}\label{fig:convergence2LayerW2}}
\caption{Convergence history for a two-layer material.}
\label{fig:convergence2}
\end{figure}

\subsection{Six Parameter model}
\label{sec:6para}
We next consider the case of a  piecewise constant material.  See Figure~\ref{fig:Param6} for the set-up.

We look at the particular case where the given target density $g$ is defined by the parameter values 
\[  v_1^* = 1, \quad v_2^* = 1.5, \quad v_3^* = 1, \quad
v_4^* = 2, \quad v_5^* = 2.5, \quad v_6^* = 1.75. \]
As in the previous example, we add noise to this target.
The resulting signal is shown in Figure~\ref{fig:signal6Param}.

\begin{figure}
\centering
\subfigure[]{\includegraphics[width=0.45\textwidth]{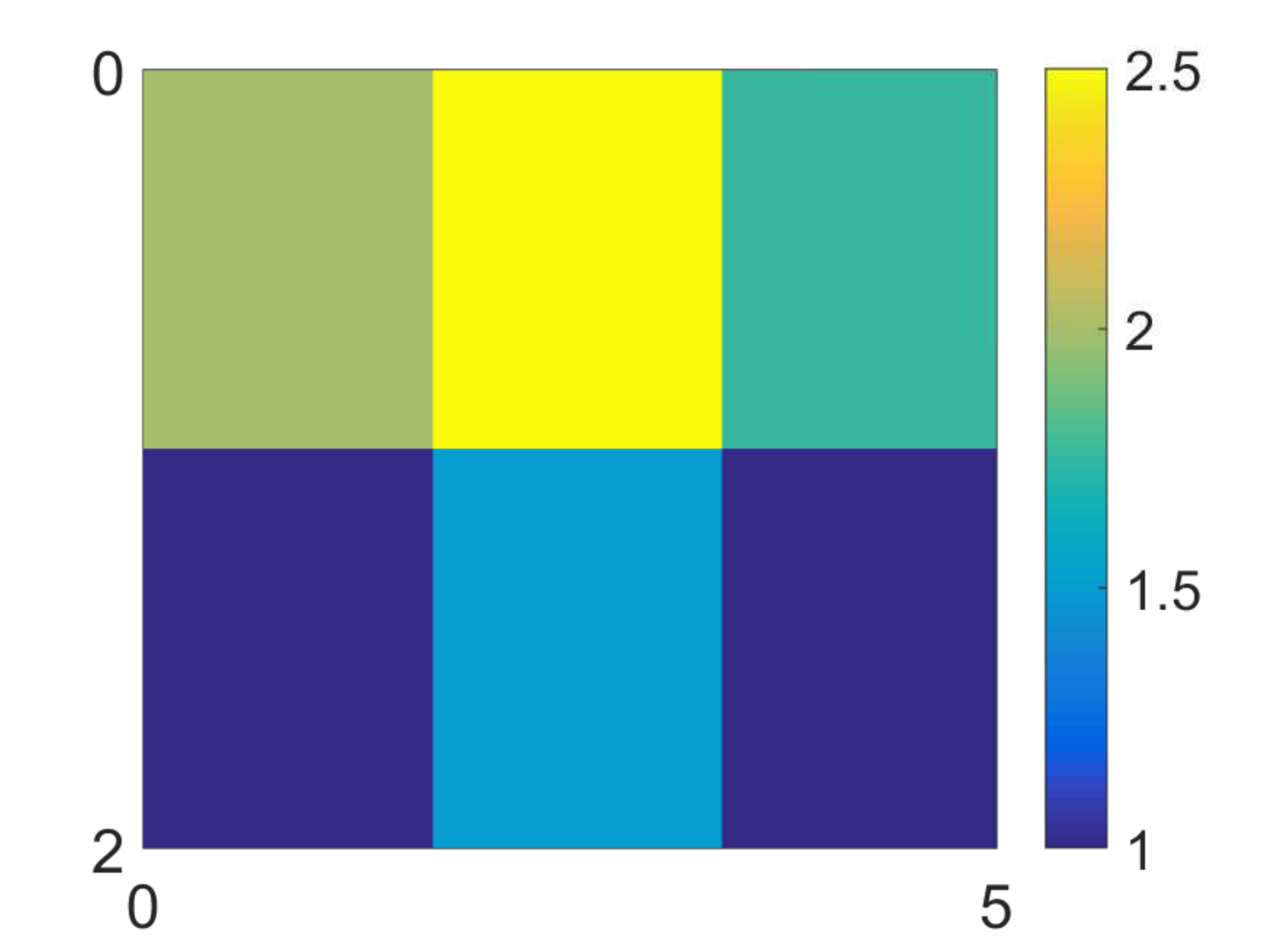}\label{fig:param6Vg}}
\subfigure[]{\includegraphics[width=0.45\textwidth]{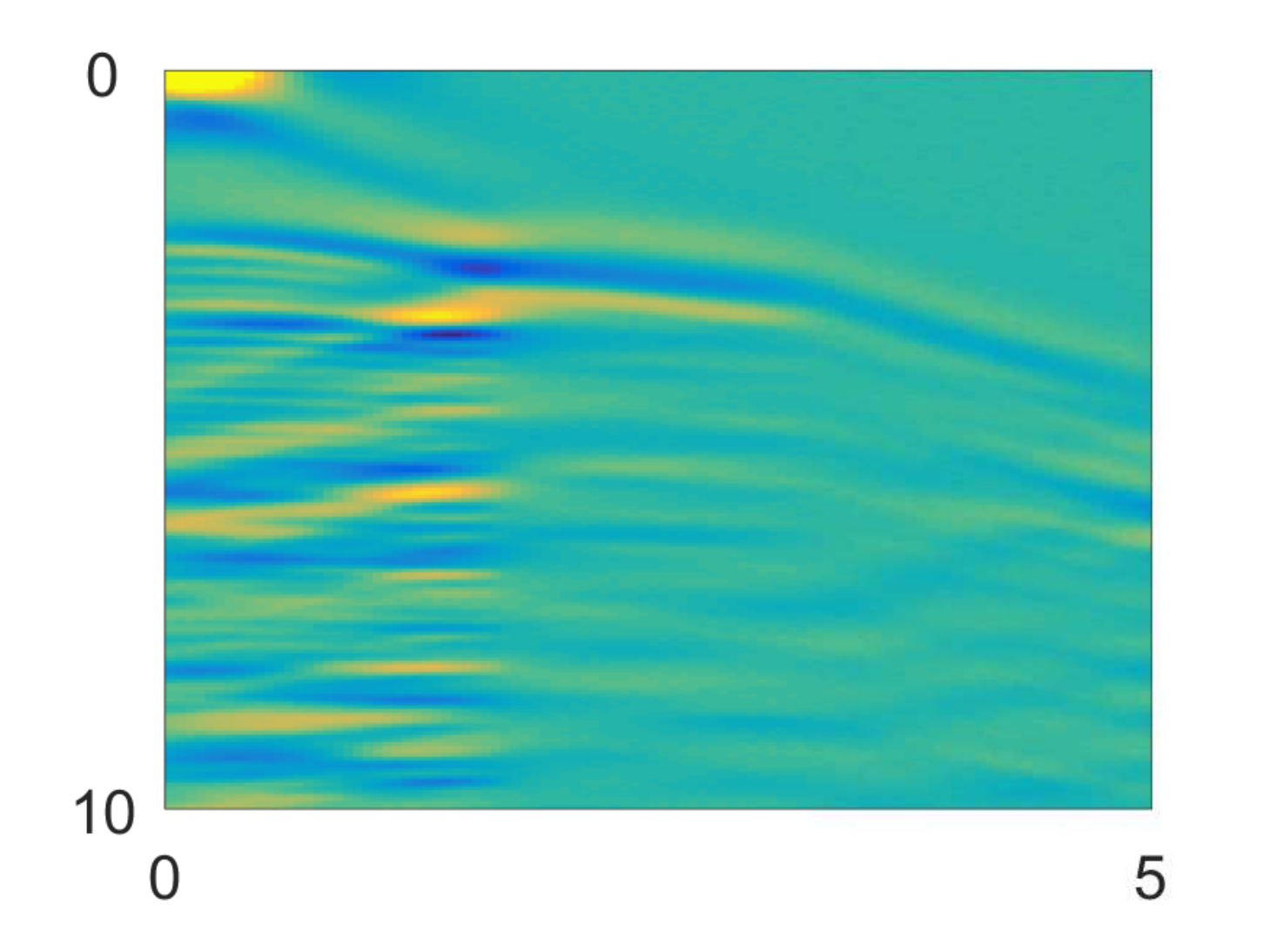}\label{fig:signal6Param}}
\subfigure[]{\includegraphics[width=0.45\textwidth]{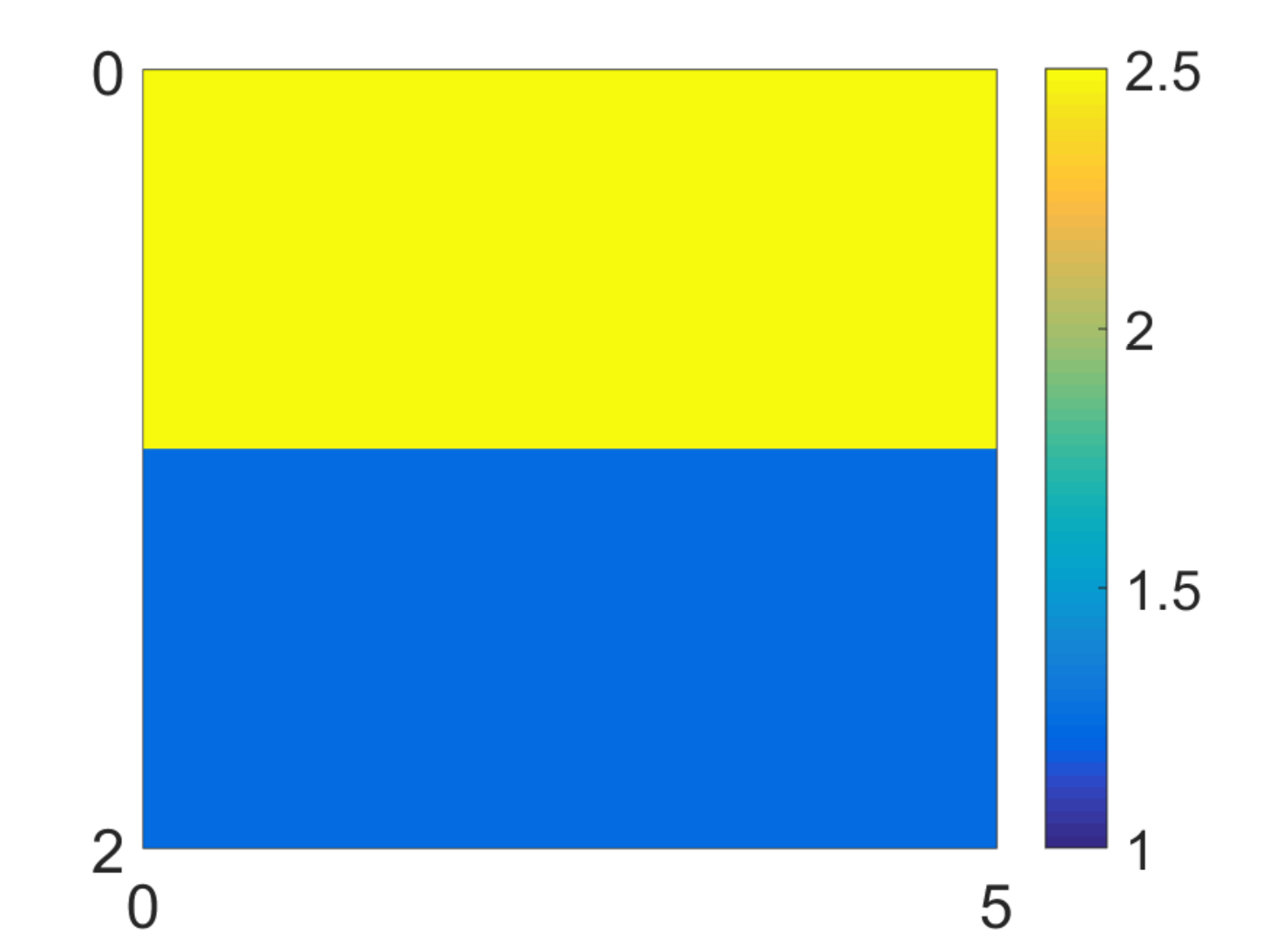}\label{fig:param6V0}}
\subfigure[]{\includegraphics[width=0.45\textwidth]{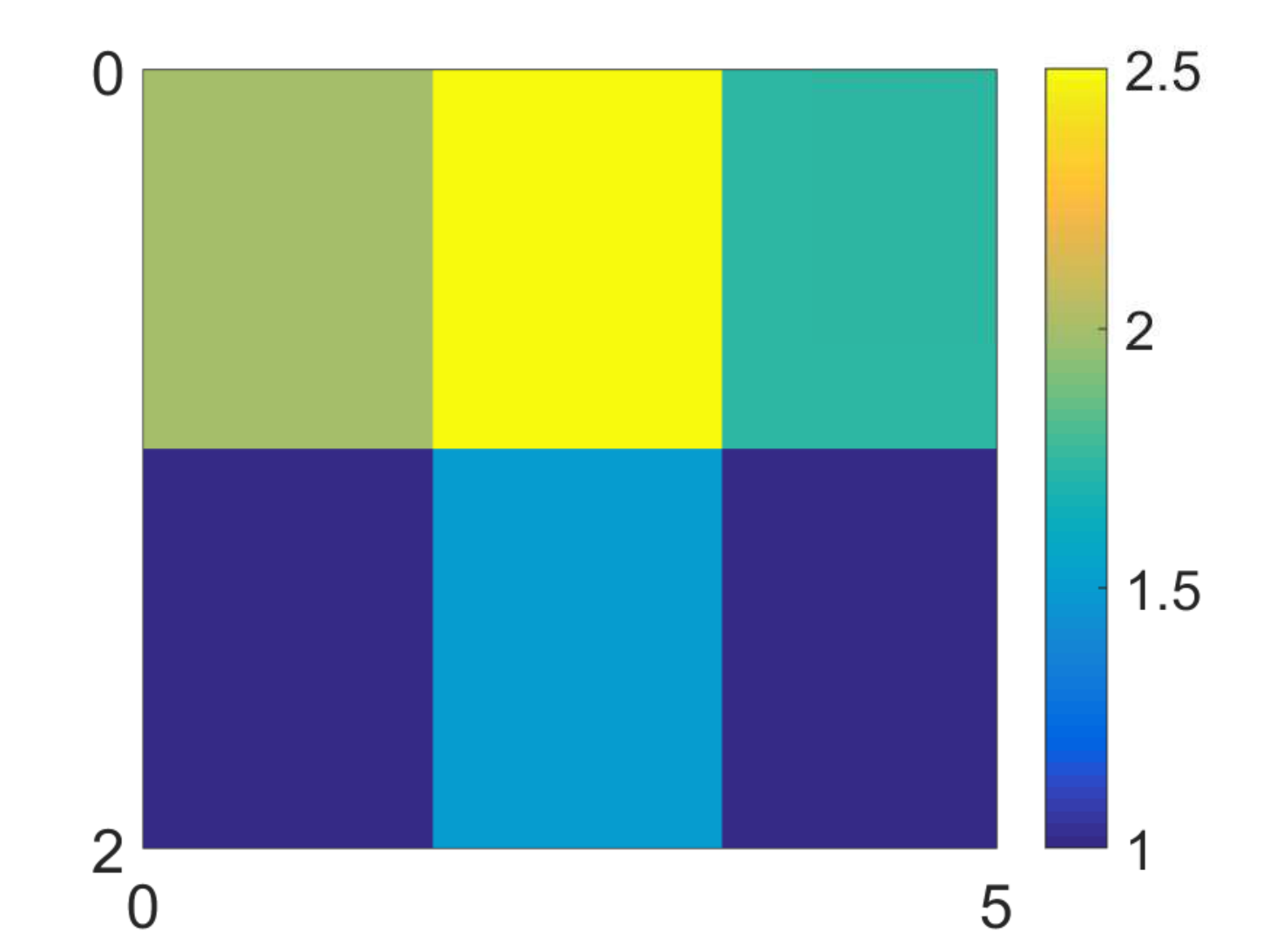}\label{fig:param6Vc}}
\caption{\subref{fig:param6Vg}~A six parameter velocity model used to generate  \subref{fig:signal6Param}~a target signal~$g$. \subref{fig:param6V0}~Initial and \subref{fig:param6Vc}~computed velocity.}
\label{fig:Param6}
\end{figure}

In this case, the distance $W_2$ depends on the six parameters $1/v_i$, $i=1,\ldots,6$.  We initialize with the guess $v_1=v_2=v_3=1.25$ and $v_4=v_5=v_6 = 2.5$.  
In this example, which depends only on velocity and not on depth, the convergence proceeds without the need for very small stepsizes that we observed in the previous example.  After 72 iterations, we recover the parameter values $\tilde{v}_1=1.0034$, $\tilde{v}_2=    1.5058$, $\tilde{v}_3=    0.9996$, $\tilde{v}_4=    1.9932$, $\tilde{v}_5=    2.4889$, and $\tilde{v}_6=    1.7296$ with a squared misfit value of $3.94\times10^{-6}$.   The convergence history is presented in Figure~\ref{fig:convergence6}.
For reference, comparison of the noisy target with the exact signal (without noise) yielded a squared Wasserstein metric of $4.82\times10^{-6}$.

\begin{figure}[htp]
\centering
\subfigure[]{\includegraphics[height=0.45\textwidth]{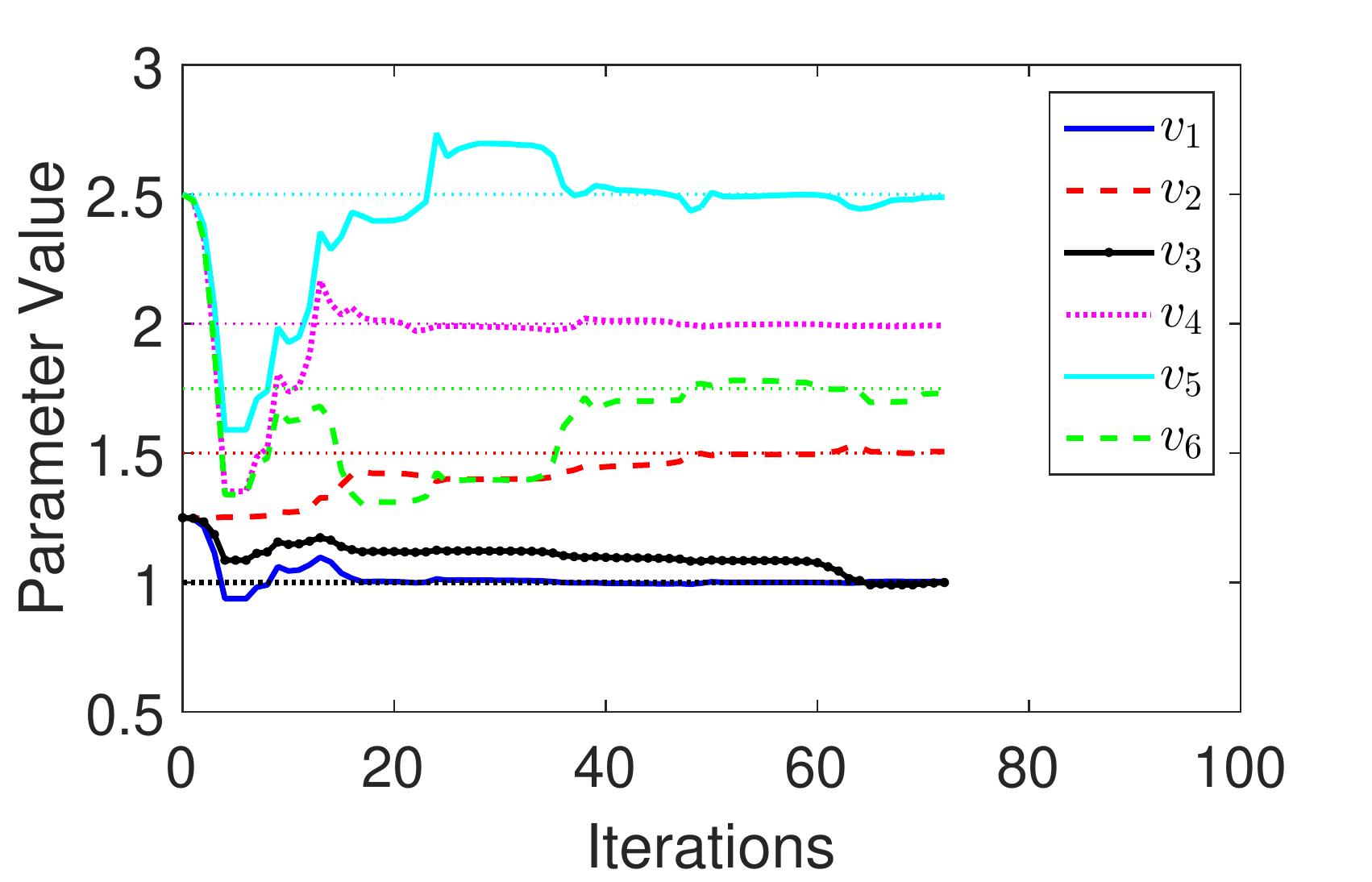}\label{fig:convergence6Param_Par}}
\subfigure[]{\includegraphics[height=0.45\textwidth]{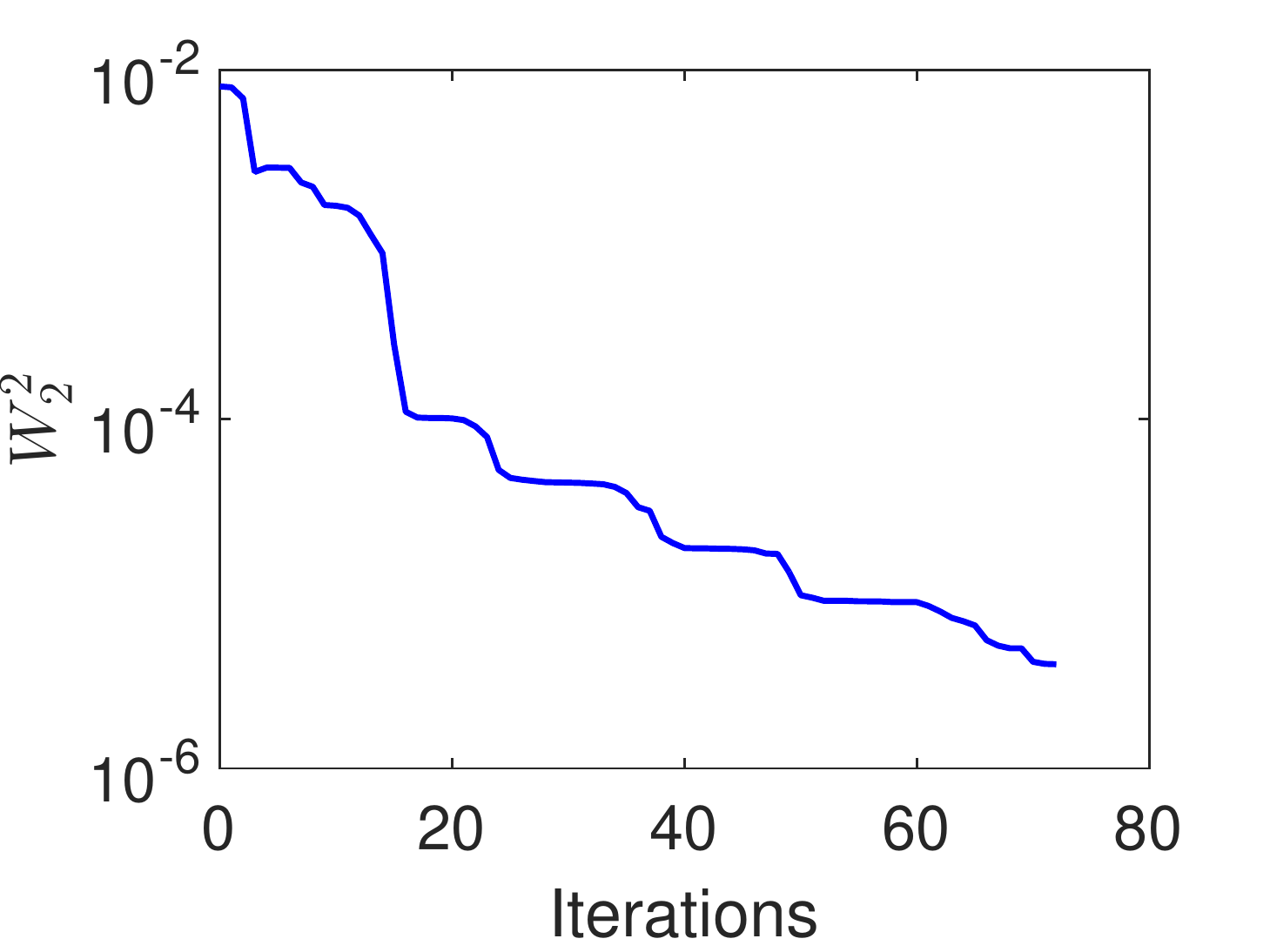}\label{fig:convergence6Param_W2}}
\caption{Convergence history for a six parameter model.}
\label{fig:convergence6}
\end{figure}

\subsection{Twelve Parameter model}

We again consider a piecewise constant velocity model, but this time increase the number of parameters to twelve.  See Figure~\ref{fig:Param12} for the set-up used to construct the (noisy) target density $g$, as well as the resulting signal.

\begin{figure}
\centering
\subfigure[]{\includegraphics[width=0.45\textwidth]{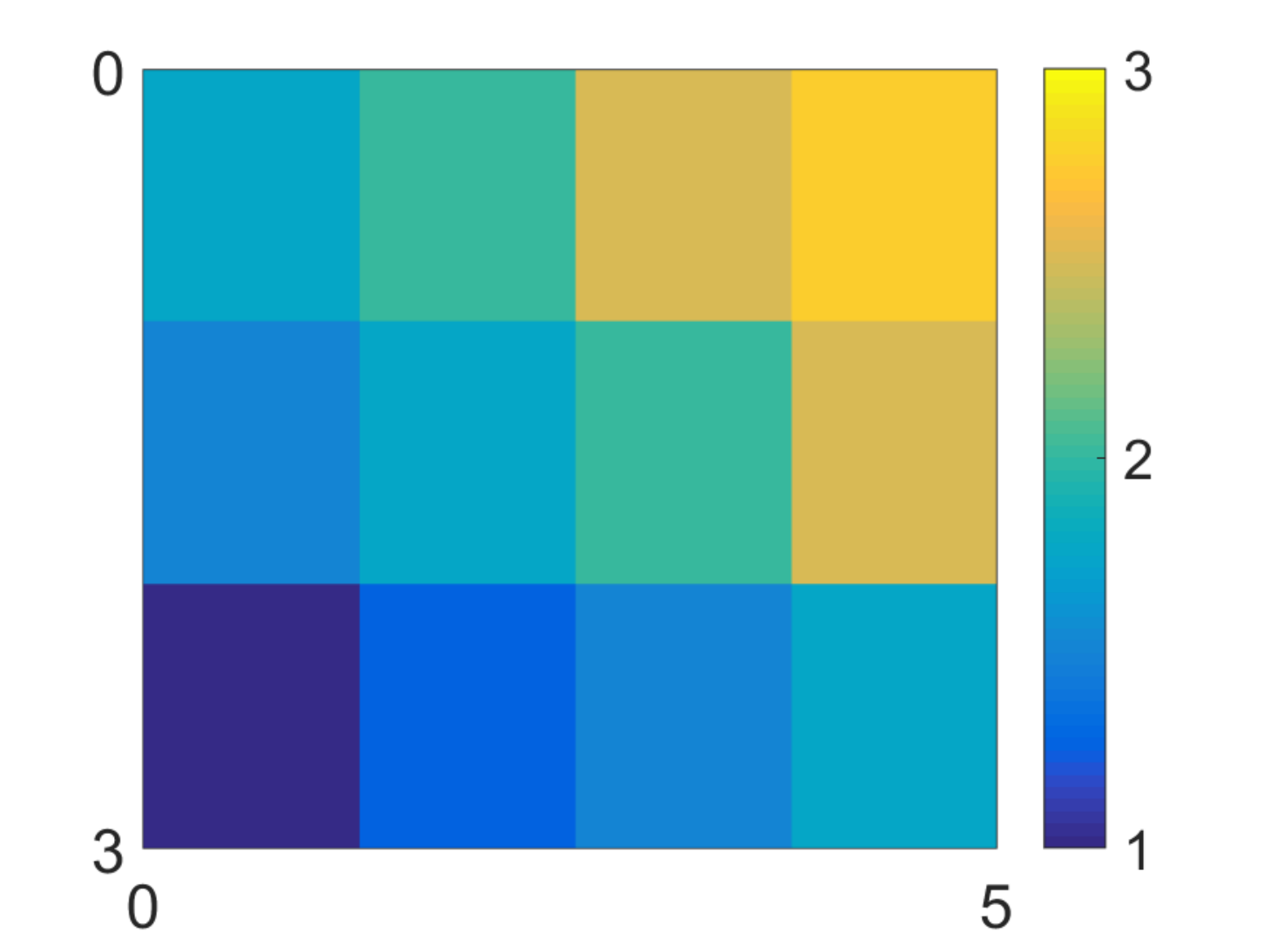}\label{fig:param12Vg}}
\subfigure[]{\includegraphics[width=0.45\textwidth]{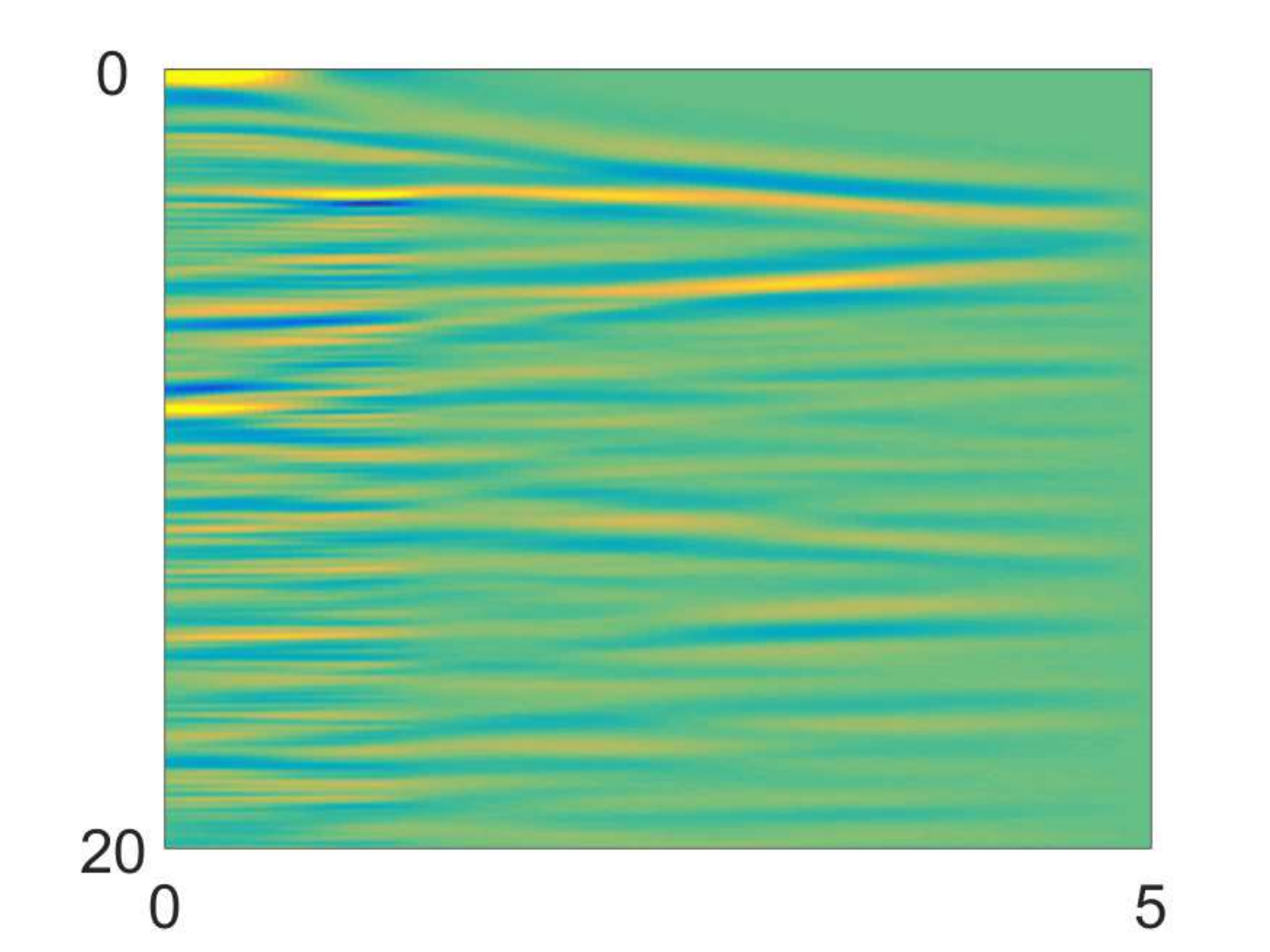}\label{fig:signal12Param}}
\subfigure[]{\includegraphics[width=0.45\textwidth]{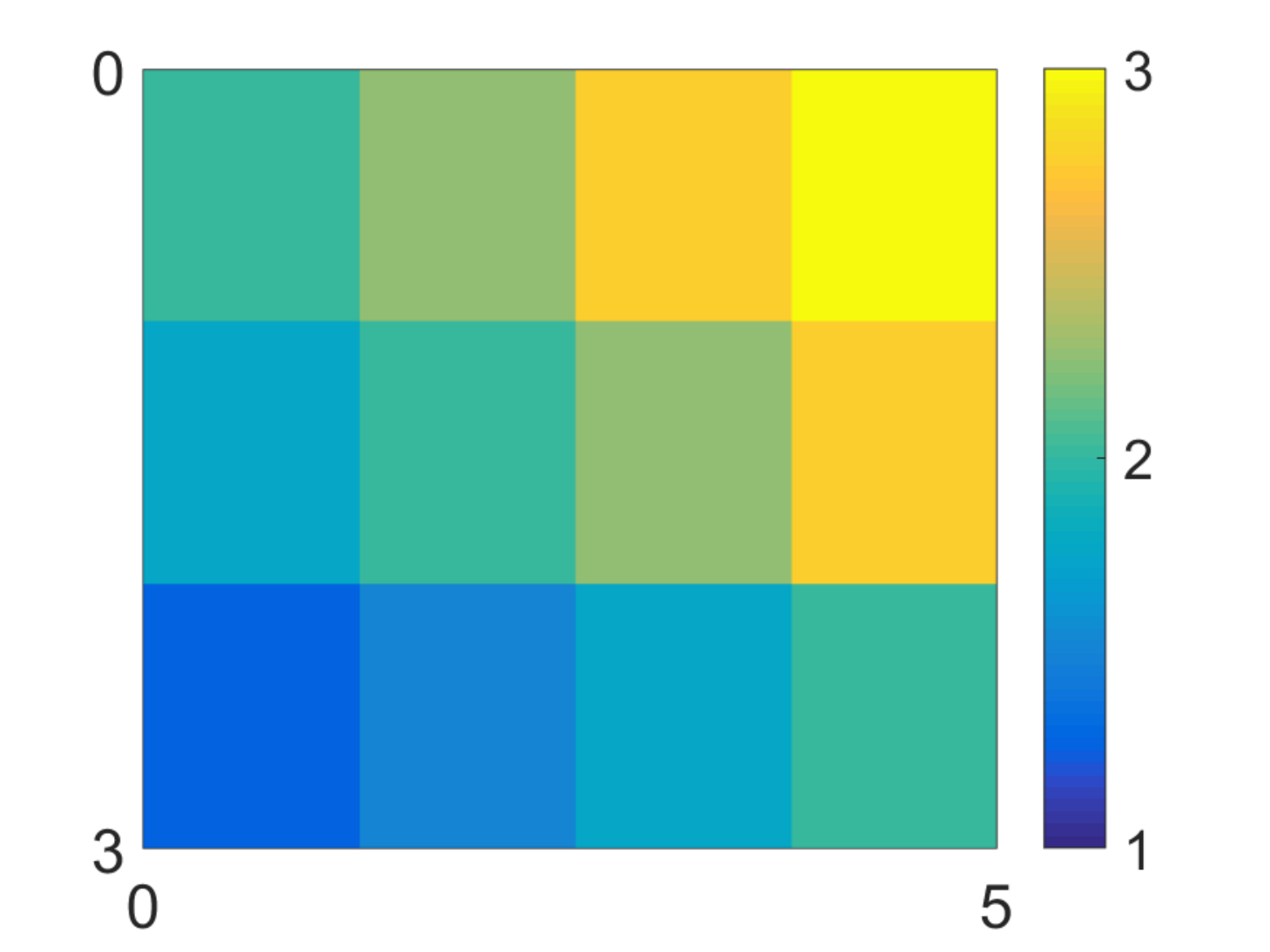}\label{fig:param12V0}}
\subfigure[]{\includegraphics[width=0.45\textwidth]{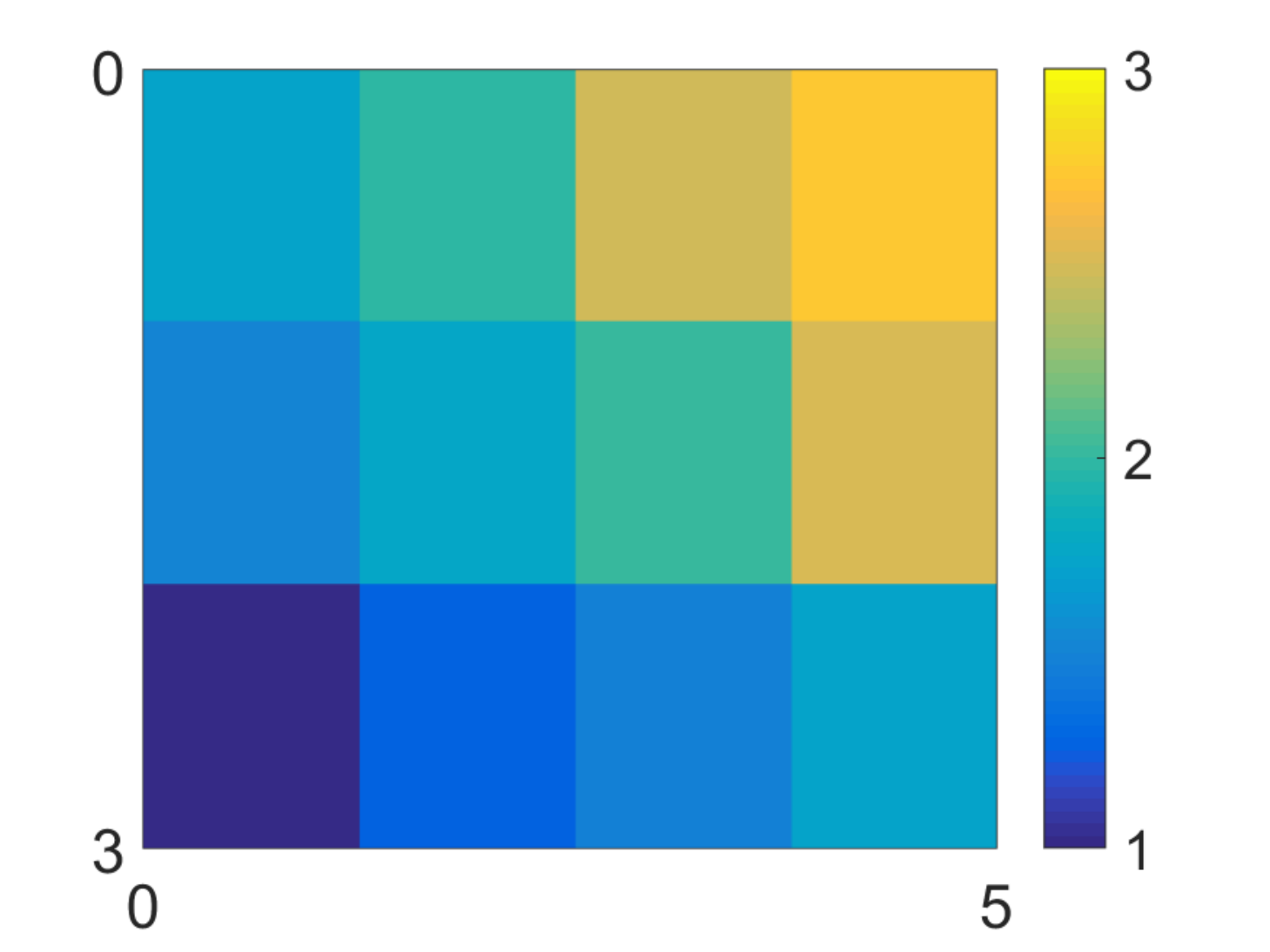}\label{fig:param12Vc}}
\caption{\subref{fig:param12Vg}~A twelve parameter velocity model used to generate  \subref{fig:signal12Param}~a target signal~$g$. \subref{fig:param12V0}~Initial and \subref{fig:param12Vc}~computed velocity.}
\label{fig:Param12}
\end{figure}

In this case, the distance $W_2$ depends on the twelve parameters $1/v_i$, $i=1,\ldots,12$.  We initialize with the guess $v = v^* + 0.25$.    After 132 iterations, we recover the twelve parameters with a maximum error of $\|\tilde{v}-v^*\|_\infty = 0.0091$ and a squared misfit value of $2.10\times10^{-6}$.   
For reference, comparison of the noisy target with the exact signal (without noise) yielded a squared Wasserstein metric of $3.16\times10^{-6}$, which suggests that the error in the recovered parameter values is due to noise in the data.
The convergence history is presented in Figure~\ref{fig:convergence12}. The simple models in~\autoref{sec:2para} and~\autoref{sec:6para} were included to indicate how the result depend on model complexity.

\begin{figure}[htp]
\centering
\subfigure[]{\includegraphics[width=0.475\textwidth]{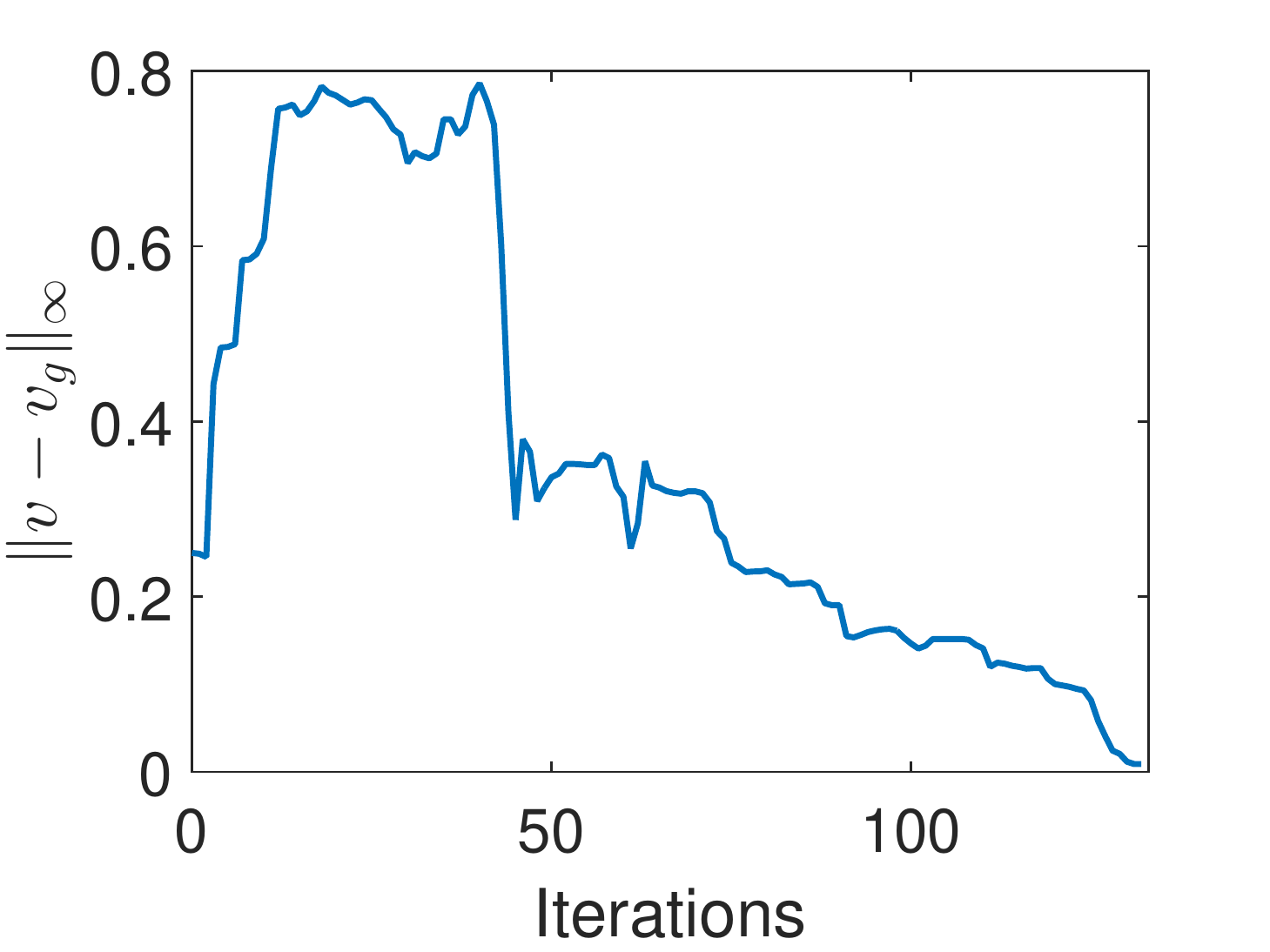}\label{fig:convergence12Param_Par}}
\subfigure[]{\includegraphics[width=0.475\textwidth]{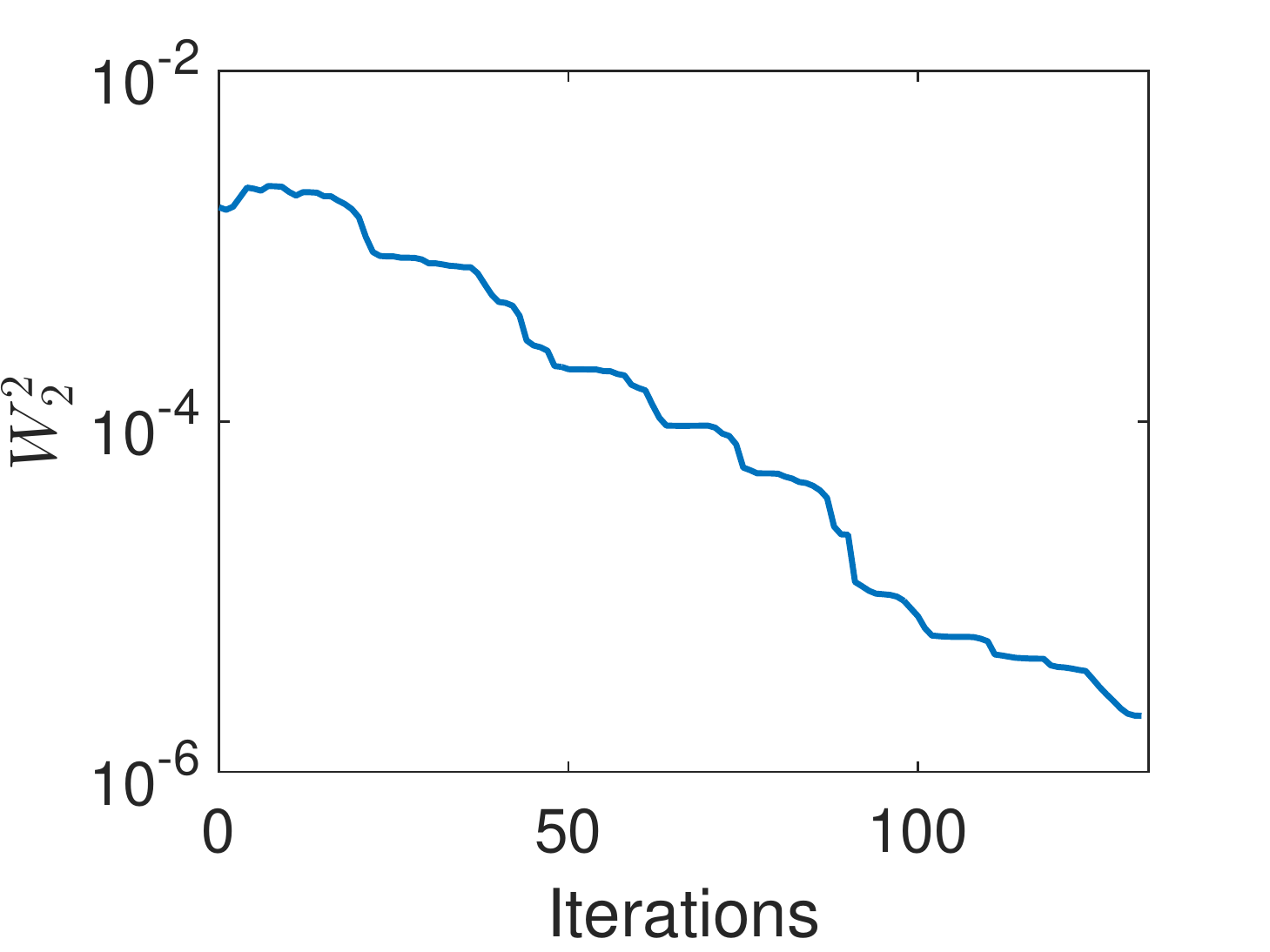}\label{fig:convergence12Param_W2}}
\caption{Convergence history for a twelve parameter model.}
\label{fig:convergence12}
\end{figure}

\section{Conclusions}\label{sec:conclusions}

In this paper, we demonstrate several advantages of the Wasserstein metric as a measure of misfit between seismic signals in connection to full waveform inversion. In particular, we proved that this distance is convex with respect to several common transformations and is less sensitive to noise than the $L_2$ distance. Additionally, the Frech\'{e}t gradient is easily computed, which makes the Wasserstein metric extremely promising for optimization and thus for seismic inversion problems. Simple numerical examples demonstrate the efficiency of using this metric.

A natural direction for future research is increasing the efficiency of the computation with quasi-Newton techniques and parallelization in order to apply the method to more realistic seismic applications.

\bibliographystyle{plain}
\bibliography{WassBib}

\end{document}